\theoremstyle{remark}
\newtheorem{obs}{Observation}
\newtheorem{ex}{Example}
\theoremstyle{definition}
\newtheorem{thm}{Theorem}
\begin{document}

\title{Direct evaluation of pure graph state entanglement}

\author{M Hajdu\v{s}ek$^1$ and M Murao$^{1,2}$}

\address{$^1$Department of Physics, Graduate School of Science, University of Tokyo, 7-3-1 Hongo, Bunkyo-ku, Tokyo, Japan, 113-0033}
\address{$^2$Institute for Nano Quantum Information Electronics, University of Tokyo, 4-6-1 Komaba, Meguro-ku, Tokyo, Japan, 153-8505}
\ead{michal@eve.phys.s.u-tokyo.ac.jp}

\begin{abstract}
We address the question of quantifying entanglement in pure graph states. Evaluation of multipartite entanglement measures is extremely hard for most pure quantum states. In this paper we demonstrate how solving one problem in graph theory, namely the identification of maximum independent set, allows us to evaluate three multipartite entanglement measures for pure graph states. We construct the minimal linear decomposition into product states for a large  group of pure graph states, allowing us to evaluate the Schmidt measure. Furthermore we show that computation of distance-like measures such as relative entropy of entanglement and geometric measure becomes tractable for these states by explicit construction of closest separable and closest product states respectively. We show how these separable states can be described using stabiliser formalism as well as PEPs-like construction. Finally we discuss the way in which introducing noise to the system can optimally destroy entanglement.
\end{abstract}

\pacs{02.10.Ox, 03.65.Ud, 03.67.-a}
\maketitle

\section{\label{sec:Introduction}Introduction}

Ever since the realisation of entanglement's importance in quantum information processing a large effort has been devoted to classifying states according to their entanglement properties \cite{Horodecki_review:2009}. This has proven to be a daunting task since unlike in the bipartite case we encounter a much richer structure when characterising multipartite quantum states \cite{Bennett:2000,Dur:2000}. Even when one concentrates on a coarser picture based on separability properties the question remains formidably hard to settle \cite{Dur:2000_sep}.

Entanglement quantification in multipartite quantum states is one of the fundamental problems in quantum information theory. In this work we concentrate on three measures of genuine multipartite entanglement, namely the relative entropy of entanglement \cite{Vedral:1997}, geometric measure \cite{Shimony:1995,Barnum:2001,Wei:2003}, and the Schmidt measure \cite{Eisert:2001}. As all three measures are defined as minimisations of distances in Hilbert space or over all linear decompositions into product states they are extremely hard to compute analytically. In this paper we restrict ourselves to pure graph states.

Examples of states for which any of these measures can be computed are sparse and usually contain some form of symmetry or admit an efficient description that facilitates the evaluation. The class of symmetric states is one such example for which the relative entropy of entanglement and the geometric measure can be computed \cite{Vedral:2004,Wei:2004}. The form of the closest separable state for pure and mixed cluster states, a particular regular instance of a graph state, has been investigated in \cite{Markham:2007,Hajdusek:2010}. More general treatments of finding the closest separable state were also attempted by inverting the problem and asking what the closest entangled state is given a separable state on the boundary between entangled and separable states \cite{Ishizaka:2003,Miranowicz:2008}. However these approaches are limited to the scenario of two qubits.

Computation of relative entropy of entanglement and the geometric measure for some classes of graph states has been considered in \cite{Markham:2007}. The Schmidt measure has been found for most of 7-qubit pure graph states equivalence classes generated by local Clifford transformations in \cite{Hein:2004}. This approach was extended to 8 qubits in \cite{Cabello:2009}. A common feature among these treatments is that the amount of entanglement was found indirectly by computing the lower and upper bounds for the respective measures. In many cases the bounds coincide and therefore the exact value for entanglement can be obtained.

We extend and modify these techniques to show how the entanglement measures can be computed directly. This may seem as a daunting task at first but we demonstrate how the optimisation problem of evaluating the entanglement measures can be mapped to a well-known problem in graph theory of finding the maximum independent set, or equivalently the minimum vertex cover. Doing this allows us to approach the optimisation problem from a more graphical and intuitive perspective. Furthermore this strategy reveals a previously unrecognised connection between the closest separable state and the minimum linear decomposition of the graph state into product states.

Structure of the paper is the following. In Section~\ref{subsec:Entanglement measures} we review the three multipartite entanglement measures evaluated in this paper and known relationship between them. For the reader's convenience we also give a brief discussion of graph theory and some quantities that are central to our argument in Section \ref{subsec:Graphs}. Section~\ref{subsec:Graph states} offers a quick overview of graph states and stabiliser formalism. Finally we conclude this review section by a discussion of lower and upper bounds on the three entanglement measures in Section~\ref{subsec:Bounding}. 

In Section~\ref{sec:Evaluating} we present our main results and discuss in detail how the entanglement measures can be evaluated directly. As we are able to evaluate the three entanglement measures only in the case when the lower and upper bounds coincide we discuss the necessary and sufficient conditions when this is the case in Section~\ref{subsec:bounds_detailed}. Section~\ref{subsec:eval} presents our main results and describes how to evaluate the three entanglement measures directly. The main logic of our argument is to identify a suitable stabiliser subgroup of the original graph state stabiliser. This subgroup stabilises a subspace spanned by product states that can be used to construct the closest separable state and hence evaluate the relative entropy of entanglement. Furthermore this subspace contains the original graph state as the equal superposition of the spanning product states allowing us to find the minimal linear decomposition of the graph state vector into product states and hence to compute the Schmidt measure. Knowing the minimum linear decomposition we can easily identify the closest product states that maximise the overlap with the pure graph state making it possible to evaluate the geometric measure.

Even though the stabiliser formalism offers the most comprehensive picture when constructing the closest separable state it is limited in the sense that it only applies to pure graph states. Therefore in Section~\ref{sec:Alternative} we explore two other descriptions of the closest separable state. The first description is based on a construction similar to that of projected entangled pairs (PEPs) \cite{Verstraete:2004}. Usually PEPs are used to describe pure entangled states whereas we show how the same argument can be used to describe mixed separable states. The main motivation behind this approach is to generalise our previous techniques to states that cannot be described within the stabiliser formalism such as weighted graph states \cite{Hartmann:2007}. Our second description of the closest separable states considers how one can optimally destroy entanglement by introducing noise to the graph state. 

Finally in Section~\ref{sec:Conclusions} we summarise our work and present possible directions of future research and some open questions.

\section{\label{sec:Preliminaries}Preliminaries}

In this section we review the three entanglement measures considered in this paper and some of their relevant properties. As the derivation of majority of our result relies on concepts from graph theory we also briefly review the relevant quantities. Finally we summarise the main properties of graph states including how they can be described using the stabiliser formalism and how the bounds of entanglement measures can be calculated.

\subsection{\label{subsec:Entanglement measures}Entanglement measures}

\textit{Relative entropy of entanglement} (REE) of an entangled state $\rho$ is defined as follows \cite{Vedral:1997}
\begin{equation*}
	E_R(\rho):=\min_{\omega\in SEP}S(\rho||\omega),
\end{equation*}
where $S:=\mathrm{Tr}[\rho\log\rho-\rho\log\omega]$ is the quantum relative entropy and the minimisation is taken over all separable states $\omega$. Since we investigate only pure states in this paper the relative entropy takes a simplified form of $S=-\mathrm{Tr}[\rho\log\omega]$. We refer to the separable state that achieves the minimum of the relative entropy as the \textit{closest separable state} (CSS). Even though the relative entropy is not a true metric we can interpret $E_R(\rho)$ as the shortest distance between the entangled state $\rho$ and the set of separable states $SEP$.

\textit{Geometric measure} \cite{Shimony:1995,Barnum:2001,Wei:2003} for a pure state $|\psi\rangle$ can be defined as
\begin{equation*}
	E_G(|\psi\rangle):=\min_{|\phi\rangle\in PROD}-\log|\langle\phi|\psi\rangle|^2,
\end{equation*}
where the minimisation is taken over all product states $|\phi\rangle$. Note that in the case of geometric measure the minimisation is taken over pure states and not all mixed states like in the case of REE. Therefore it is usually easier to compute $E_G$ rather than $E_R$. We call the product state that achieves the maximum overlap with $|\psi\rangle$ the \textit{closest product state} (CPS). For general entangled states $E_G$ gives the lower bound to $E_R$ but for pure stabiliser states the measures are equal \cite{Hayashi:2006}.

Consider a pure state $|\psi\rangle\in\mathcal{H}_1\otimes\ldots\otimes\mathcal{H}_N$ of an $N$-partite system. It can be written as
\begin{equation*}
	|\psi\rangle=\sum_{i=1}^{R}\xi_i|\psi_i^1\rangle\otimes\ldots\otimes|\psi_i^N\rangle,
\end{equation*}
where $\xi_i\in\mathbb{C}$. The \textit{Schmidt measure} \cite{Eisert:2001} is then defined as
\begin{equation*}
	E_S(|\psi\rangle)=\log R_{min},
\end{equation*}
where $R_{min}$ is the minimal number of terms in the expansion of $|\psi\rangle$ over all linear decompositions into product states. Note that $E_S$ is not continuous but this does not pose a problem as the set of graph states is also discrete. 

\subsection{\label{subsec:Graphs}Graphs}

A \textit{graph} is a pair $G=(V,E)$ \cite{Diestel:2010}. Elements of $V=\{1,\ldots,N\}$ are the vertices and elements of $E\subseteq[V]^2$ are the edges connecting the vertices. Pictorially one represents graphs as a set of dots, representing the vertices, connected by lines according to $E$, representing the edges. We consider only \textit{simple graphs}, that is graphs where the vertices are not connected to themselves so $(a_i,a_i)\notin E$, and the graph contains no multiple edges between the same set of vertices. Furthermore we concentrate only on connected graphs, that is any two vertices $a_i, a_j\in V$ are connected by a path in $G$.

The \textit{neighbourhood} of a vertex $a\in V$, denoted by $N_a$, is defined as the set of all vertices that are adjacent to vertex $a$, $N_a:=\{b\in V|(a,b)\in E\}$. The \textit{degree} $d(a)$ of a vertex $a$ is in our case the size of $a$'s neighbourhood $|N_a|$. A \textit{vertex colouring} is a map $c:V\rightarrow S$ such that $c(v)\neq c(w)$ when vertices $v$ and $w$ are adjacent. The elements of $S$ are called the colours. A graph is called \textit{bipartite} if it is two-colourable. If a graph cannot be coloured by only two colours then it is called \textit{non-bipartite}.

One crucial quantity that we make use of extensively is the \textit{independent set}. An independent set is a set of vertices where no pair is adjacent. A \textit{maximum independent set} is the largest independent set for a given graph and is denoted by $\alpha(G)$. A closely related concept is that of a \textit{vertex cover}. A vertex cover is a set of vertices such that each edge of the graph is incident to at least one vertex in the vertex cover. \textit{Minimum vertex cover}, $\beta(G)$, is then naturally the smallest such set of $G$. The relationship between $\alpha(G)$ and $\beta(G)$ is that they are complements of each other which means that $\alpha(G)+\beta(G)=V$. This means that finding one set automatically gives the other. However identifying either the maximum independent set or the minimum vertex cover is a known NP-hard problem \cite{Karp:1972}.

Finally we mention a very useful transformation of a graph known as \textit{local complementation} \cite{Bouchet:1993}. Local complement (LC) of $G$ at vertex $a$, denoted by $\tau_a(G)$, is obtained by complementing the subgraph of $G$ induced by the neighbourhood $N_a$, and leaving the rest of the graph unchanged. It is equivalent to adding a fully connected graph of vertices $N_a$, denoted $G(N_a)$, to the original graph modulo 2
\begin{equation*}
	\tau_a:G\mapsto\tau_a(G):=G+G(N_a).
\end{equation*}
Applying local complementation to a vertex $a$ adds edges between its neighbours where the addition is performed modulo 2 as demonstrated in Fig.~\ref{fig:LC}.

\begin{figure}[t]
\centering
\includegraphics[scale=2.3]{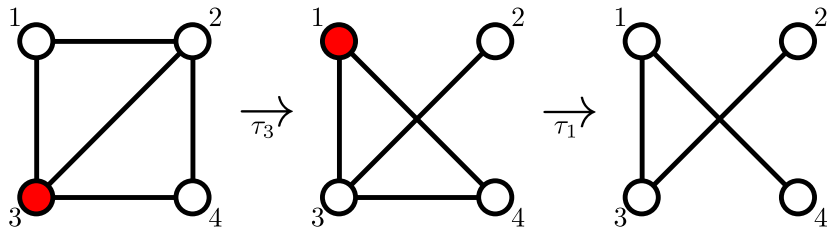}
\caption{\label{fig:LC} Successive application of local complementation first to vertex 3 and then to vertex 1. The corresponding graph states of all three of these graphs are LC-equivalent.}
\end{figure}

\subsection{\label{subsec:Graph states}Graph states}

Consider a graph $G$ and its associated \textit{graph state} $|G\rangle$ which can be prepared by placing a qubit at each vertex in the state $|+\rangle=\frac{1}{\sqrt{2}}(|0\rangle+|1\rangle)$ and applying the entangling control-Z gate $CZ_{ij}$ between all vertices $i$ and $j$ that are adjacent \cite{Briegel:2001,Raussendorf:2001}
\begin{equation}\label{eq:graph_state}
	|G\rangle=\prod_{(i,j)\in E}CZ_{ij}|+\rangle^{\otimes V}.
\end{equation}

An alternative and more efficient way of describing the graph states is using the stabiliser formalism. The graph state is the unique, common eigenvector in $(\mathbb{C}^2)^V$ to the set of independent commuting observables $\{g_i\}_{i=1}^N$ given by \cite{Hein:2006}
\begin{equation}\label{eq:stabilizer}
	g_i:=X_i\bigotimes_{j\in N_i}Z_j.
\end{equation}
The eigenvalues of the correlation operators in Eq.~(\ref{eq:stabilizer}) are $+1$ for all $i\in V$. The Abelian subgroup $\mathcal{S}$ of the Pauli group $\mathcal{P}^V$ generated by the set of all the correlation operators $\{g_i|i\in V\}$ is called the stabiliser of $|G\rangle$. We can generate a basis for $(\mathbb{C}^2)^V$ using the graph state $|G\rangle$ by defining the set of states
\begin{equation*}\label{eq:graph_basis}
	g_i|G_{k_{1}\ldots k_{i}\ldots k_{N}}\rangle=(-1)^{k_{i}}|G_{k_{1}\ldots k_{i}\ldots k_{N}}\rangle,
\end{equation*}
where $k_{1}\ldots k_{N}$ is a binary string. The original graph state given by Eq.~(\ref{eq:graph_state}) is simply $|G_{0\ldots 0}\rangle$. These states are all locally equivalent as $|G_{k_{1}\ldots k_{N}}\rangle=\prod_{i=1}^NZ_i^{k_{i}}|G_{0\ldots0}\rangle$ and therefore they all have the same amount of entanglement. The projector onto the graph state can be written in the following nice form
\begin{equation}\label{eq:graph_state_projector}
|G\rangle\langle G|=\frac{1}{2^N}\sum_{\sigma\in\mathcal{S}}\sigma,
\end{equation}
where the sum is taken over all elements of the stabiliser $\mathcal{S}$.

A graph state $|G\rangle$ corresponds uniquely to some graph $G$. However a situation may arise when two distinct graphs $G$ and $G'$ represent two graph states that are related by some unitary operation $|G'\rangle=U|G\rangle$. Their stabilisers transform accordingly $\mathcal{S}'=U\mathcal{S}U^{\dag}=\{U\sigma U^{\dag}|\sigma\in\mathcal{S}\}$. So it can easily happen that two seemingly different graphs represent graph states with the same amount of entanglement. In this paper we consider such unitary operations that permute the elements of the Pauli group and therefore map stabilisers to other stabilisers. These transformations are called \textit{local Clifford operations}, $C_1^V=\{U\in\mathbf{U}(2)^V|U\mathcal{P}^VU^{\dag}=\mathcal{P}^V\}$. So two graph states $|G\rangle$ and $|G'\rangle$ are \textit{LC-equivalent} iff they are related by some local Clifford unitary $U\in C_1^V$. In fact the action of local Clifford unitaries on graph states can be described graphically using local complementation transformations on the corresponding graph \cite{Nest:2004}. Applying the local complementation transformation to vertex $a$ of graph $G$ yields a new graph $\tau_a(G)$. The corresponding graph states $|G\rangle$ and $|\tau_a(G)\rangle$ are then LC-equivalent and are related by local Clifford operation $|\tau_a(G)\rangle=U_a^{\tau}|G\rangle$ given by
\begin{equation*}\label{eq:Clifford_unitary}
	U_a^{\tau}=\sqrt{-iX_a}\otimes\sqrt{iZ_{N_{a}}}.
\end{equation*}
Two graph states $|G\rangle$ and $|G'\rangle$ are LC-equivalent if their corresponding graphs are related by a sequence of local complementations $G'=\tau_{a_{1}}\circ\ldots\circ\tau_{a_{n}}(G)$. Furthermore the local Clifford unitary relating two equivalent graph states can be found efficiently \cite{Nest:2004b}.

\textcolor{red}{}

\subsection{\label{subsec:Bounding}Bounding the entanglement}

Finding a candidate separable state that can be used in quantifying the entanglement is often a very hard task. However proving that this state achieves the minimum value of relative entropy or the maximum overlap is equally difficult. Fortunately the situation for many classes of graph states is such that lower and upper bounds on the entanglement measures coincide therefore the exact value of entanglement is known. This makes direct evaluation of entanglement measures easier because it is enough to find separable state $\omega$ and product state $|\phi\rangle$ that yield this value for $E_R$ and $E_G$ respectively. In the case of the Schmidt measure $E_S$, knowing its value tells us the minimum number of  terms the linear decomposition of $|G\rangle$ into product states needs to contain.

In \cite{Markham:2007} the authors showed how the upper and lower bounds for relative entropy of entanglement and geometric measure can be found. The upper bound is obtained by considering perfect LOCC discrimination of a subset of the complete orthogonal graph state basis $\{|G_{k_{1}\ldots k_{N}}\rangle\}$. A lower bound on the number of these states that can be discriminated is given by maximising the number of stabiliser generators $\{g_i\}$ that can be determined in a single setting of LOCC measurements. This bound can be achieved by identifying the maximum independent set $\alpha(G)$ of the graph $G$ and measuring $X$ on these qubits and $Z$ on their neighbours. This can in fact be used to find an upper bound on the relative entropy of entanglement and the geometric measure $N-|\alpha(G)|\geq E_R=E_G$ \cite{Markham:2007}.

Lower bound for the entanglement can be obtained in the usual way of relaxing the condition of full separability and maximising the entanglement $E_{bi}$ between all the bipartitions of the graph. This can be viewed as creating the maximum number of Bell pairs $m_p$ between the bipartitions by $CZ$ and local Clifford operations applied within the bipartitions. Summarizing the lower and upper bounds for the relative entropy of entanglement and the geometric measure are \cite{Markham:2007}
\begin{equation*}\label{eq:bounds_REE_GM}
	N-|\alpha(G)|\geq E_R=E_G\geq E_{bi}=m_p.
\end{equation*}
Classes of graph states for which these bounds are equal include $d$-dimensional cluster states, GHZ states and ring states with even number of qubits.

The bounds for Schmidt measure can be found in similar fashion \cite{Hein:2004,Hein:2006}. The lower bound is given by the maximal Schmidt rank $\textrm{SR}_{\max}$ which for graph states is equal to the entropy of entanglement $\textrm{SR}_{\max}=-\textrm{Tr}[\rho^A\log\rho^A]$ where $A\subseteq V$ is a subset of the vertices. The upper bound is given by the minimal number of local Pauli measurements needed to completely disentangle the graph state. This quantity is referred to as the \textit{Pauli persistency}, PP($G$), and is bounded from above by the size of the minimum vertex cover $|\beta(G)|$ \cite{Briegel:2001}. So the Schmidt measure is bounded as follows
\begin{equation*}\label{eq:bounds_Schmidt}
	|\beta(G)|\geq\textrm{PP}(G)\geq E_S\geq \textrm{SR}_{\max}.
\end{equation*}
States up to 7 and 8 qubits for which these bounds are equal are categorised in \cite{Hein:2006} and \cite{Cabello:2009} respectively.

\section{\label{sec:Evaluating}Evaluating entanglement measures}

Now we have all the necessary tools needed to construct the minimal linear decomposition of the graph state into product states and hence the closest separable state $\omega$ and the closest product state $|\phi\rangle$. But before we do this it is necessary to mention that in many cases the lower and upper bounds coincide. In this section we present an algorithm for constructing the closest separable states that saturate the upper bound. Therefore when the bounds coincide it is trivial to show that a candidate separable state is truly the closest one. However if the bounds are not equal then it is not possible to conclusively state that the obtained separable state is the closest one. It is therefore crucial to understand when the bounds are equal and when they are not.

\subsection{\label{subsec:bounds_detailed}Equality of lower and upper bounds}

To understand when the lower and upper bounds are equal we first demonstrate how the lower bound for all three entanglement measures can be found in terms of a single quantity from graph theory.

A \textit{matching} $M$ of a graph $G$ is a set of independent edges. Two edges are independent when they do not have any common vertices. A subset of vertices $U\subseteq V$ is called matched if every vertex $a\in U$ is incident to an edge in $M$. If the entire vertex set $V$ is matched by $M$ then it is called a \textit{perfect matching}.

The notion of matching is very useful because the cardinality of maximum matching $|M_{max}|$ is equal to the maximum number of Bell pairs that can be created for a bipartition of $G$ as well as the maximal Schmidt rank $\textrm{SR}_{\max}$.

\begin{figure}[t]
	\centering
	\includegraphics[scale=2.3]{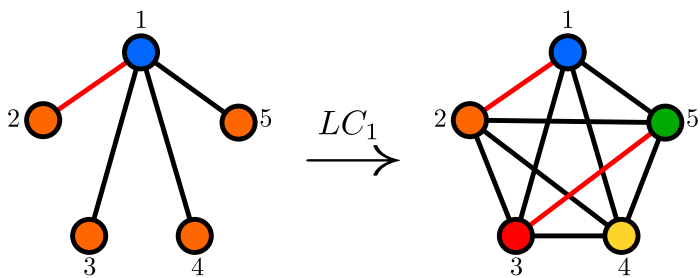}
	\caption{\label{fig:fig_2} Star graph and complete graph are related by a single local complementation operation therefore their corresponding graph states are LC-equivalent and in this case  have entanglement of 1. Care needs to be taken when computing the lower and upper bounds as the entire LC orbit of a graph, not just the graph itself, needs to be considered. Red edges depict the maximum matching.}
\end{figure}
At this point we would like to stress one crucial point that has been implicitly covered in Section~\ref{subsec:Graph states}. We have explained that two graph states $|G\rangle$ and $|G'\rangle$ which are related by a local Clifford transformation have underlying graphs $G$ and $G'$ related by a series of local complementation operations. However the properties of the two graphs may be very different. In particular their maximum matching and minimum vertex cover may have different cardinalities. One such example is the GHZ state pictured in Fig.~\ref{fig:fig_2}. This graph state corresponds to a star graph which can be transformed to a complete graph using local complementation. So the graph states corresponding to a star graph and a complete graph are LC-equivalent and therefore have the same entanglement of 1. However for the star graph we have $|M_{max}|=1$ and $|\beta(G)|=1$ whereas the complete graph has $|M_{max}|=\left\lfloor\frac{N}{2}\right\rfloor$ and $|\beta(G)|=N-1$.

To overcome this apparent contradiction it is necessary to consider the entire LC-equivalency class of a graph $G$ and compute the size of $M_{max}$ and $\beta(G)$ for all graphs in the equivalency class to find the minimum. Such canonical forms for all LC-equivalency classes have been identified up to 7 qubits in \cite{Hein:2004,Hein:2006}, up to 8 qubits in \cite{Cabello:2009} and up to 12 qubits in \cite{Cabello:2011}. For a graph state up to 12 qubits it is enough to find which canonical form it is equivalent to. However for a general graph state of $N$ qubits one has to first generate the entire LC orbit of the underlying graph. Therefore when computing the lower and upper bounds for the entanglement of a graph state it is not enough to consider only the graph itself. The entire LC orbit needs to be taken into account. In the rest of this paper we will assume that the canonical form of the underlying graph ith smallest $M_{max}$ and $\beta(G)$ across the LC orbit is known.

Consider a graph $G$ with a maximum matching $M_{max}$. Select either endvertex from each independent edge in $M_{max}$ to form one partition $A$. The other partition is then given by all the other vertices in $V$, that is $B=V-A$. Apply control-Z gates locally within the partitions to remove any edges that do not cross the bipartition boundary. As a result of this only edges that contribute towards entanglement across the bipartition are kept. In certain cases this transformation leaves all the vertices in the partitions with either degree 0 or degree 1. This corresponds to the case of Bell pairs being created across the bipartition. However in the general case some vertices will have degree higher than 1. Then we need to apply a series of local complementation transformations along with further control-Z gates to either decrease their degree to 1 or completely disconnect them from the rest of the graph. Crucial observation here is that none of the above transformations delete an edge from the maximum matching $M_{max}$. Therefore this procedure produces $|M_{max}|$ Bell pairs across the bipartition. This method of transforming the graph state has been introduced in \cite{Markham:2007} and demonstrated for a selection of graph states with underlying bipartite graphs. Fig.~\ref{fig:fig_3} displays the procedure for a particular example of a non-bipartite graph.
\begin{figure}[t]
	\centering
	\includegraphics[scale=1.8]{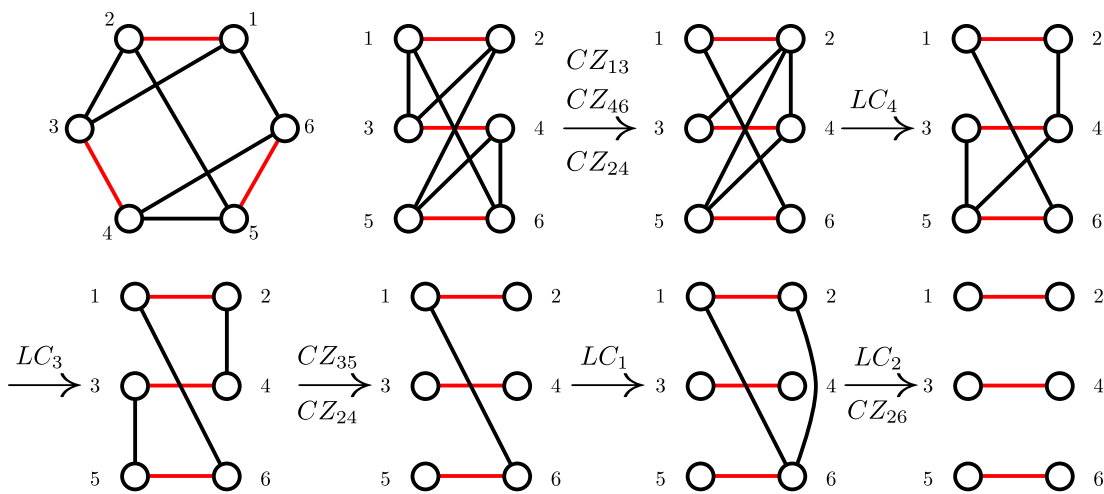}
	\caption{\label{fig:fig_3} Non-bipartite graph (in \cite{Hein:2006} denoted as No.~19). The maximum matching $M_{max}=\{(1,2),(3,4),(5,6)\}$ is represented by red colour. The bipartition is given by the sets $A=\{1,3,5\}$ and $B=\{2,4,6\}$. Successive application of local control-Z gates within the partitions and local complementation results in a disconnected graph representing 3 Bell pairs.}
\end{figure}

Having established that the lower bound for all three measures is given by the size of the maximum matching $|M_{max}|$ and the upper bound by the cardinality of the minimum vertex cover $|\beta(G)|$ we will now investigate when these bounds are equal and when not. To do this we will partition the graph states into the set of bipartite and non-bipartite states, and consider them separately.

\textit{Bipartite graph states}. For all bipartite graph states the lower and upper bounds coincide. This can be seen immediately using K\"{o}nig's Theorem \cite{Diestel:2010} which states that for bipartite graphs the sizes of maximum matching and of minimum vertex cover are equal, $|M_{max}|=|\beta(G)|$.

\textit{Non-bipartite graph states.} The situation for non-bipartite graph states is more complicated as K\"{o}nig's Theorem does not hold anymore and in general $|M_{max}|\neq|\beta(G)|$. However there are still numerous cases when the two bounds are equal. In fact the crucial quantity that determines whether the bounds are equal is the size of the maximum independent set $|\alpha(G)|$:
\begin{enumerate}
	\item If $|\alpha(G)|<\frac{N}{2}$ then $|M_{max}|\neq|\beta(G)|$.
	\item If $|\alpha(G)|>\frac{N}{2}$ then $|M_{max}|=|\beta(G)|$.
	\item If $|\alpha(G)|=\frac{N}{2}$ then we have the following two scenarios:
	\begin{enumerate}
		\item If $M_{max}$ is perfect then $|M_{max}|=|\beta(G)|$.
		\item If $M_{max}$ is not perfect then $|M_{max}|\neq|\beta(G)|$.
	\end{enumerate}
\end{enumerate}

Statement~1 can be proven easily. For a graph state of $N$ qubits the maximum number of Bell pairs that can be created is $|M_{max}|\leq\left\lfloor \frac{N}{2}\right\rfloor$. However the upper bound for entanglement is given by $N-|\alpha(G)|>\frac{N}{2}$ from our assumption. Therefore the bounds are not equal, $|M_{max}|\neq|\beta(G)|$.

Statement~2 becomes clear when one considers a bipartition of the graph $G$ with partition $A$ given by the vertices in the maximum independent set $A:=\{v|v\in\alpha(G)\}$ and partition $B$ given by all the other vertices, that is the minimum vertex cover $B:=\{v|v\in\beta(G)\}$. It is clear that all vertices $v\in A$ are not connected. Now consider the reduced neighbourhood of a vertex $a\in B$ defined as $N^{red}_a:=\{v\in A|(a,v)\in E\}$. So $N^{red}_a$ is just the usual neighbourhood with vertices in $B$ removed. When looking for the maximum matching of graph $G$ we can consider each $N^{red}_a$ separately. The crucial point to note here is that since all $v\in N^{red}_a$ are incident to a single vertex $a\in B$ we can only pick one edge that contributes to the maximum matching. The fact that $|M_{max}|=|\beta(G)|$ follows straightaway. Including edges between vertices in $B$ in the matching can only decrease the matching's cardinality so they have been omitted from the argument. This is illustrated in Fig.~\ref{fig:fig_4}.
\begin{figure}[t]
	\centering
	\includegraphics[scale=2.3]{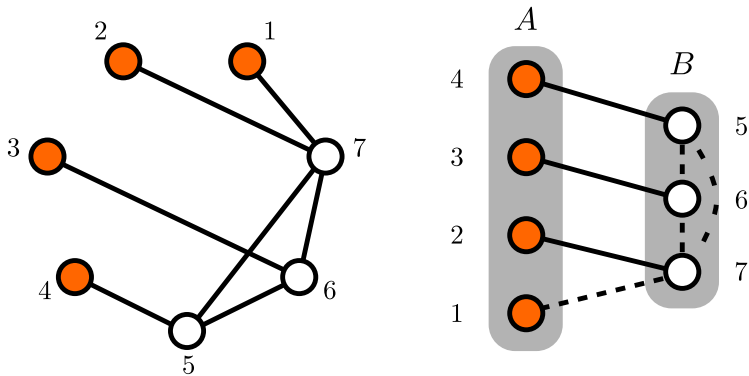}
	\caption{\label{fig:fig_4} Graph $G$ with $|\alpha(G)|>\frac{N}{2}$. The maximum independent set is represented by orange vetices. The reduced neighborhoods are $N^{red}_5=\{4\}$, $N^{red}_6=\{3\}$ and $N^{red}_7=\{1,2\}$. The maximum matching is depicted by edges with solid lines across the bipartition $A-B$.}
\end{figure}

Statement~3 becomes clear after one realises what it means to have a perfect matching. The upper bound for entanglement in both cases (a) and (b) is given by $|\beta(G)|=N-|\alpha(G)|=\frac{N}{2}$. If the matching $M_{max}$ is perfect then each vertex in the graph is matched. Since a matching is a set of independent edges this implies that $|M_{max}|=\frac{N}{2}$. Therefore the entanglement bounds coincide, $|M_{max}|=|\beta(G)|$. On the other hand if $M_{max}$ is not a perfect matching then there exists at least one vertex which is not matched by $M_{max}$. Hence the number of vertices that can are matched by $M_{max}$ is upper bounded by $N-1$. We know that $N$ is even so the size of the maximum matching is upper bounded by $|M_{max}|\leq\frac{N-2}{2}$. This shows that if $|\alpha(G)|=\frac{N}{2}$ and the maximum matching is not perfect then the bounds for entanglement do not coincide.

\subsection{\label{subsec:eval}Entanglement evaluation}

As mentioned in Section~\ref{subsec:Graph states} the stabiliser $\mathcal{S}$ of an $N$-qubit graph state generated by $N$ generators $g_i$ given by Eq.~(\ref{eq:stabilizer}) stabilises a unique entangled state. Consider a scenario where we discard a set of $k$ generators. The new stabiliser is given by $\mathcal{S}_{N-k}=\langle g_1,\ldots,g_{N-k}\rangle$ where we have relabeled the remaining generators from $\mathcal{S}$ for convenience. Because we no longer have a full set of $N$ generators, $\mathcal{S}_{N-k}$ does not stabilise a unique state. Rather it stabilises a set of states $\{|\psi_1\rangle,\ldots,|\psi_D\rangle\}$ that span a $D$-dimensional subspace. The dimensionality of this subspace depends on the particular structure of the generators $g_i\in\mathcal{S}_{N-k}$ and the states $\{|\psi_i\rangle\}$ may be entangled or product states. We are interested in the case when $\mathcal{S}_{N-k}$ stabilises a set of product states.

\begin{obs}\label{obs:obs1}
	(Stabilised entangled states): $\mathcal{S}_{N-k}$ stabilises entangled states if it contains at least two generators $g_a,g_b\in\mathcal{S}_{N-k}$ where $g_b$ contains $Z_a$.
\end{obs}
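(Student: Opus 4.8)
The plan is to fix the bipartition that singles out qubit $a$ from the rest and to show that the two retained generators already forbid any product state across this cut, so that in fact every nonzero vector stabilised by $\mathcal{S}_{N-k}$ is entangled. First I would record the local action of the two generators on qubit $a$. By definition $g_a=X_a\bigotimes_{j\in N_a}Z_j$ acts as $X_a$ on qubit $a$, so I can write $g_a=X_a\otimes M$, where $M$ is a tensor product of Pauli operators on the remaining $N-1$ qubits and is therefore invertible. The hypothesis that $g_b$ contains $Z_a$ means precisely that $a\in N_b$, so $g_b=X_b\bigotimes_{j\in N_b}Z_j$ acts as $Z_a$ on qubit $a$, and I can write $g_b=Z_a\otimes K$ with $K$ again an invertible Pauli operator on the other qubits. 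Since $g_a$ and $g_b$ both lie in $\mathcal{S}_{N-k}$, they act with eigenvalue $+1$ on the whole stabilised subspace.

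Next I would expand an arbitrary stabilised vector along qubit $a$ as $|\psi\rangle=|0\rangle_a|\phi_0\rangle+|1\rangle_a|\phi_1\rangle$, with $|\phi_0\rangle,|\phi_1\rangle$ states of the remaining qubits, and impose the two eigenvalue equations. Applying $g_b=Z_a\otimes K$ and matching the $|0\rangle_a$ and $|1\rangle_a$ components gives
\begin{equation*}
K|\phi_0\rangle=|\phi_0\rangle,\qquad K|\phi_1\rangle=-|\phi_1\rangle,
\end{equation*}
so $|\phi_0\rangle$ and $|\phi_1\rangle$ lie in opposite eigenspaces of $K$. Applying $g_a=X_a\otimes M$ and matching components gives
\begin{equation*}
M|\phi_0\rangle=|\phi_1\rangle,\qquad M|\phi_1\rangle=|\phi_0\rangle,
\end{equation*}
so $M$ interchanges the two components; in particular, since $M$ is invertible, $|\phi_0\rangle=0$ forces $|\phi_1\rangle=0$ and conversely.

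I would then argue by contradiction that no nonzero product state across the $\{a\}\,|\,(V\setminus\{a\})$ cut survives. Such a state has $|\phi_0\rangle=\mu|\phi\rangle$ and $|\phi_1\rangle=\nu|\phi\rangle$ for a common nonzero vector $|\phi\rangle$ and scalars $(\mu,\nu)\neq(0,0)$. If both $\mu$ and $\nu$ are nonzero, the relations from $g_b$ would force $K|\phi\rangle=|\phi\rangle$ and $K|\phi\rangle=-|\phi\rangle$ simultaneously, which is impossible. If exactly one of them vanishes, say $\nu=0$, then $|\phi_1\rangle=0$, and the relation $M|\phi_0\rangle=|\phi_1\rangle=0$ together with invertibility of $M$ forces $|\phi_0\rangle=0$, hence $|\phi\rangle=0$, a contradiction; the case $\mu=0$ is symmetric. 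Therefore every nonzero $|\psi\rangle$ in the subspace is entangled across this bipartition, so $\mathcal{S}_{N-k}$ stabilises only entangled states, which in particular establishes the claim.

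The argument is short once the bookkeeping is set up, and I do not anticipate a genuine obstacle; the one point that needs care is that a single generator is not enough. The crossed structure coming from $g_b$ alone rules out product states in which both components are present, but it is powerless against the \emph{polarised} product states $|0\rangle_a|\phi\rangle$ and $|1\rangle_a|\phi\rangle$. Excluding these is exactly where the second generator $g_a$, acting as $X_a$ and hence flipping qubit $a$, together with the invertibility of its Pauli tail $M$, becomes indispensable. This also explains why the hypothesis requires two generators, and why it is the pairing of an $X_a$ in one generator with a $Z_a$ in the other, rather than the detailed neighbourhoods, that drives the entanglement.
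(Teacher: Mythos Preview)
Your argument is correct and in fact sharper than what the paper offers. The paper does not give a formal proof of this observation; instead it argues constructively: it considers the product eigenstates of $g_a$ and of $g_b$ separately, notes that $g_a$ permutes the product eigenstates of $g_b$ (and vice versa), and concludes that forming common eigenvectors forces superpositions, hence entanglement. This is then illustrated on the three-qubit linear graph. That reasoning exhibits an entangled spanning set for the stabilised subspace but does not, strictly speaking, exclude the existence of some other product state in the subspace.

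Your route is different: you fix the bipartition $\{a\}\,|\,(V\setminus\{a\})$, read off the local actions $g_a=X_a\otimes M$ and $g_b=Z_a\otimes K$, and show directly that no nonzero simultaneous $+1$ eigenvector can be a product across this cut. This is a genuine impossibility argument rather than a construction, and it yields the stronger conclusion that \emph{every} state in the stabilised subspace is entangled across $\{a\}$, hence not fully separable. What the paper's approach buys is intuition about the structure of the stabilised subspace (it tells you what the entangled basis actually looks like, which is used later in Theorem~\ref{thm:Schmidt}); what your approach buys is rigour and economy, and a clear identification of why both generators are needed --- your closing paragraph on the role of $X_a$ versus $Z_a$ is exactly the point.
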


Such a scenario happens when the generators $g_a$ and $g_b$ correspond to adjacent qubits, that is $(a,b)\in E$. Generator $g_a$ stabilises some subspace $\mathcal{H}_a$ spanned by states $\{|\psi^{(a)}_i\rangle\}$ and $g_b$ stabilises a subspace $\mathcal{H}_b$ spanned by states $\{|\psi^{(b)}_j\rangle\}$. Since qubits $a$ and $b$ are adjacent, the action of $g_a$ on the states $\{|\psi^{(b)}_j\rangle\}$ is to permute them and the same is true for $g_b$ acting on $\{|\psi^{(a)}_i\rangle\}$. In order for $g_a$ to stabilise the states $\{|\psi^{(b)}_j\rangle\}$ we have to take superpositions of these states according to how they are permuted by $g_a$. This finally yields states that are stabilised by $\mathcal{S}_{N-k}$, however due to the superpositions the stabilised states are entangled. We illustrate this in the following example.

\begin{ex}\label{ex:ex1}
	Consider a three-qubit open linear graph state given by the stabiliser
	\begin{equation*}\label{eq:3_qubit_example}
		\mathcal{S}=\langle \underbrace{XZI}_{g_{1}}, \underbrace{ZXZ}_{g_{2}}, \underbrace{IZX}_{g_{3}}\rangle.
	\end{equation*}
	Say we discard the third generator so the new stabiliser is given by $\mathcal{S}_{2}=\langle XZI, ZXZ\rangle$. Each generator stabilises a set of states
	\begin{eqnarray*}
		g_1 & = & XZI:\{|+0.\rangle,|-1.\rangle\}, \\
		g_2 & = & ZXZ:\{|0+0\rangle,|1+1\rangle,|0-1\rangle,|1-0\rangle\},
	\end{eqnarray*}
	where the states of the form $|+0.\rangle$ mean that the generator does not fix the last qubit. We can quickly check that the action of $g_1$ on the states stabilised by $g_2$ is to 	permute them, that is $g_1|0+0\rangle=|1-0\rangle$ and $g_1|1+1\rangle=|0-1\rangle$. Requiring that $g_1$ stabilises all the above states we are forced to take superpositions of the 	states which finally yields the states that are stabilised by $\mathcal{S}_{2}$ to be
	\begin{equation*}\label{eq:stabilized_states_a_example}
		\{\frac{1}{\sqrt{2}}(|0+\rangle+|1-\rangle)\otimes|0\rangle,\frac{1}{\sqrt{2}}(|0-\rangle+|1+\rangle)\otimes|1\rangle\}.
	\end{equation*}
	The stabilised subspace is two-dimensional and because of the form of the generators of $\mathcal{S}_2$ it is spanned by entangled states. On the other hand we can quickly check that 	discarding $g_2$ from the original $\mathcal{S}$ produces a stabilised subspace spanned by product states $\{|+0+\rangle,|-1-\rangle\}$.
\end{ex}

The above observation tells us exactly when $\mathcal{S}_{N-k}$ stabilises a set of product states. The desired structure of the generators $g_i\in\mathcal{S}_{N-k}$ is achieved when the generators correspond to non-adjacent qubits. As a vertex colouring of the underlying graph $G$ partitions the set of qubits into subsets of independent qubits, the possible ways of discarding generators from the original stabiliser $\mathcal{S}$ are given by all the possible vertex colourings that $G$ admits. More specifically if the generators corresponding to qubits of the same colour are the only generators not discarded then $\mathcal{S}_{N-k}$ stabilises a set of product states. Note that the converse is not true as it is possible for $\mathcal{S}_{N-k}$ to contain generators corresponding to different colours and still stabilise a set of product states. This is illustrated in Fig.~\ref{fig:fig_5}.
\begin{figure}[t]
	\centering
	\includegraphics[scale=2.3]{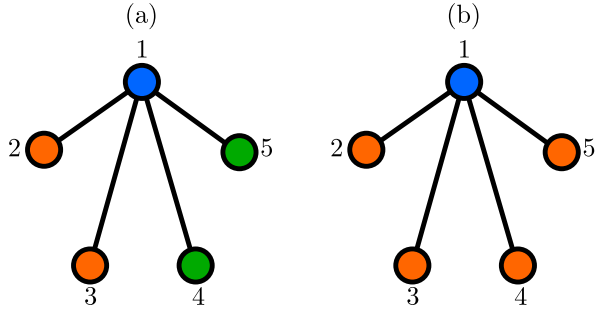}
	\caption{\label{fig:fig_5} Two valid colourings for a 5-qubit GHZ state. (a)~Both $\mathcal{S}_2=\langle g_2,g_3\rangle$ and $\mathcal{S}_2=\langle g_4,g_5\rangle$ stabilise a set of product states. Keeping certain generators corresponding to qubits with different colours can also stabilise a set of product states. Such a stabiliser is $\mathcal{S}_3=\langle g_2,g_3,g_4\rangle$. (b)~Maximum independent set, $\alpha(G)=\{2,3,4,5\}$, identifying the largest set of generators that stabilise a set of product states.}
\end{figure}

Knowing how to discard generators from $\mathcal{S}$ to produce a product basis we can focus on the important question of doing this optimally. Ideally we would like to discard as few generators as possible. In light of the above discussion it is straightforward to see that the generators that need to be kept correspond to qubits in the maximum independent set $\alpha(G)$, or equivalently, the generators that need discarding correspond to the minimum vertex cover $\beta(G)$. Similar approach has been used to compute upper bounds for the three entanglement measures in \cite{Markham:2007,Hein:2004,Hein:2006} though our logic of deriving this result is complementary to the approach in these references.

We will now show how the entanglement measures can be evaluated directly. We refer to the stabiliser of generators corresponding to qubits in the maximum independent set $\alpha(G)$ as $\mathcal{S}_{\alpha}$, the stabilised subspace as $\mathcal{H}_{\alpha}$ and product states spanning this subspace as $\{|\psi^{\alpha}_i\rangle\}$.

\begin{thm}\label{thm:Schmidt}
(Minimal linear decomposition into product states): Given a graph state $|G\rangle$, its minimal linear decomposition into product states is a superposition of states $\{|\psi^{\alpha}_i\rangle\}$
\begin{equation}\label{eq:schmidt_decomposition}
	|G\rangle=\frac{1}{\sqrt{D_{\alpha}}}\sum_{i=1}^{D_{\alpha}}(-1)^{f_i(\mathcal{S})}|\psi^{\alpha}_i\rangle,
\end{equation}
where $f_i(\mathcal{S})$ is a binary-valued function and its value depends on the action of the original stabiliser $\mathcal{S}$ on the states $|\psi^{\alpha}_i\rangle$. This ensures that the form of $|G\rangle$ in Eq.~(\ref{eq:schmidt_decomposition}) is stabilised by the whole $\mathcal{S}$. $D_{\alpha}$ is the dimension of the subspace $\mathcal{H}_{\alpha}$ and depends on the size of the minimum vertex cover as $D_{\alpha}=2^{|\beta(G)|}$.
\end{thm}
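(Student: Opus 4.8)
The plan is to prove the four assertions in turn: that $|G\rangle\in\mathcal{H}_\alpha$, that $\dim\mathcal{H}_\alpha=D_\alpha=2^{|\beta(G)|}$, that every coefficient has the common modulus $1/\sqrt{D_\alpha}$, and that $D_\alpha$ terms cannot be beaten. Since $\mathcal{S}_\alpha=\langle g_i\mid i\in\alpha(G)\rangle$ is a subgroup of the full stabiliser $\mathcal{S}$ and $|G\rangle$ is the common $+1$ eigenstate of all of $\mathcal{S}$, it is in particular stabilised by $\mathcal{S}_\alpha$, so $|G\rangle\in\mathcal{H}_\alpha$ and expands as $|G\rangle=\sum_i c_i|\psi^\alpha_i\rangle$. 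The $|\alpha(G)|$ generators of $\mathcal{S}_\alpha$ are independent and commuting, so the standard stabiliser count gives $\dim\mathcal{H}_\alpha=2^{N-|\alpha(G)|}=2^{|\beta(G)|}=D_\alpha$, using $\alpha(G)+\beta(G)=V$. I would then fix the explicit product basis already used in Section~\ref{subsec:Bounding}: the commuting observables $\{X_i\mid i\in\alpha(G)\}\cup\{Z_j\mid j\in\beta(G)\}$ form a complete set of which every $g_i$ ($i\in\alpha$) is a product, so the joint eigenstates of this set lying in $\mathcal{H}_\alpha$ are precisely the $|\psi^\alpha_i\rangle$. They are labelled by a cover bit string $\mathbf{b}=(b_j)_{j\in\beta(G)}$, assigning $|b_j\rangle$ to each cover qubit, with each $\alpha$-qubit forced into the $X$-eigenstate of sign $\prod_{j\in N_i}(-1)^{b_j}$ by $g_i=+1$; this yields exactly the $2^{|\beta(G)|}$ basis states.

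The heart of the argument is the action of the discarded generators $\{g_a\mid a\in\beta(G)\}$ on this basis. Each $g_a$ commutes with $\mathcal{S}_\alpha$, hence preserves $\mathcal{H}_\alpha$, and being a Pauli string it permutes the product basis up to signs. Concretely its factor $X_a$ flips the cover bit $b_a\mapsto b_a\oplus 1$ while the factors $Z_j$ on neighbours $j\in N_a\cap\alpha(G)$ readjust the $\alpha$-signs, so $g_a$ sends $\mathbf{b}$ to $\pm$ the state labelled $\mathbf{b}\oplus e_a$. The abelian group $\mathcal{D}=\langle g_a\mid a\in\beta(G)\rangle$, of order $2^{|\beta(G)|}$, therefore acts on the labels as the full bit-flip group, i.e. the regular (free and transitive) action. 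I would establish freeness directly: a nontrivial $h=\prod_{a\in T}g_a$ with $\emptyset\neq T\subseteq\beta(G)$ acts on each qubit $a\in T$ as $X_a$, or as $Y_a$ when an adjacent cover vertex contributes a $Z_a$; either Pauli flips the $Z$-eigenstate $|b_a\rangle$, so $h$ fixes no basis state. Freeness together with $|\mathcal{D}|=D_\alpha$ forces transitivity, so the entire basis is a single $\mathcal{D}$-orbit. I expect this regularity to be the main obstacle, as it is exactly what pins all coefficients to one common modulus.

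Given the regular action the coefficients follow from stabilisation. Because $\mathcal{S}=\langle\mathcal{S}_\alpha,\,\{g_a\}\rangle$ has a one-dimensional $+1$ eigenspace, $|G\rangle$ is the \emph{unique} state of $\mathcal{H}_\alpha$ fixed by every $g_a$, hence the image of the averaging projector $P=2^{-|\beta(G)|}\sum_{h\in\mathcal{D}}h$. Applying $P$ to any reference basis state and using that its orbit sweeps all $D_\alpha$ basis states exactly once, I obtain $P|\psi^\alpha_1\rangle=2^{-|\beta(G)|}\sum_{i=1}^{D_\alpha}(-1)^{f_i(\mathcal{S})}|\psi^\alpha_i\rangle$, where $(-1)^{f_i(\mathcal{S})}$ is the sign accumulated by the group element carrying the reference state to $|\psi^\alpha_i\rangle$; this is precisely the $\mathcal{S}$-dependent function in the statement. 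Normalising gives Eq.~(\ref{eq:schmidt_decomposition}) with common modulus $1/\sqrt{D_\alpha}$, and no cancellation occurs since distinct basis states are orthogonal.

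Finally, for minimality the decomposition writes $|G\rangle$ as a superposition of $D_\alpha=2^{|\beta(G)|}$ product states, so $E_S(|G\rangle)\leq\log D_\alpha=|\beta(G)|$. The lower bound of Section~\ref{subsec:Bounding} gives $E_S\geq\textrm{SR}_{\max}=|M_{max}|$. In the setting of this section, where by Section~\ref{subsec:bounds_detailed} the bounds coincide, $|M_{max}|=|\beta(G)|$, the two estimates squeeze to $E_S=|\beta(G)|$, i.e. $R_{min}=2^{|\beta(G)|}=D_\alpha$. Hence no linear decomposition into product states can use fewer than $D_\alpha$ terms, and the construction above is minimal.
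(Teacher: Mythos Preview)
Your proof is correct and follows essentially the same route as the paper: both identify the product basis $\{|\psi^\alpha_i\rangle\}$ stabilised by $\mathcal{S}_\alpha$, show that the $\beta$-generators permute it up to signs, build $|G\rangle$ as the unique $\mathcal{S}$-invariant superposition in $\mathcal{H}_\alpha$, and then squeeze the upper bound $\log D_\alpha$ against the lower bound $|M_{max}|$ (valid when $|M_{max}|=|\beta(G)|$) for minimality. Your averaging-projector formulation and explicit freeness/transitivity argument for the $\mathcal{D}$-action make the equal-modulus claim cleaner than the paper's iterative construction $|\psi'_i\rangle=|\psi^\alpha_i\rangle+g_k|\psi^\alpha_i\rangle$, but the two are the same operator $\prod_{a\in\beta(G)}\tfrac{1}{2}(I+g_a)$ applied to a reference state, so the underlying argument is identical.
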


\begin{proof}
	The proof consists of two parts. First part shows that the decomposition in Eq.~(\ref{eq:schmidt_decomposition}) actually gives the graph state $|G\rangle$. The second part shows that the decomposition achieves the minimum bound of the Schmidt measure $E_S(|G\rangle)$.
	
	 From Observation~(\ref{obs:obs1}) we can be sure that the states $|\psi^{\alpha}_i\rangle$ are product states. Furthermore since these states are stabilised by $\mathcal{S}_{\alpha}$ we have $g_j|\psi^{\alpha}_i\rangle=|\psi^{\alpha}_i\rangle$ for all $j\in\mathcal{S}_{\alpha}$ and $i\in\{1,\ldots,D_{\alpha}\}$. All we have to determine is the action of generators in the minimum vertex cover $\beta(G)$ on the states $|\psi^{\alpha}_i\rangle$. Consider the action of one of the generators in the minimum vertex set $g_k$, $k\in\beta(G)$. Due to the structure of the correlation operators this permutes the elements in the stabilised set $g_k|\psi^{\alpha}_j\rangle=|\psi^{\alpha}_i\rangle$. If the qubit $k$ is adjacent to another qubit in the minimum vertex cover then the action of $g_k$ may be to also introduce a negative sign, $g_k|\psi^{\alpha}_j\rangle=-|\psi^{\alpha}_i\rangle$. By taking superpositions of states $|\psi^{\alpha}_i\rangle$ that are permuted with each other and including the negative amplitudes we can construct a new set of states $|\psi'_i\rangle$ that are stabilised by the generators $g_i\in\alpha(G)$ as well as the new generator $g_k\in\beta(G)$. The new states will be of the form
	 \begin{equation*}\label{eq:proof1_superposition}
	 	|\psi'_i\rangle=|\psi^{\alpha}_i\rangle+g_k|\psi^{\alpha}_i\rangle.
	 \end{equation*}
Repeating this process with a new generator $g_l$ where $l\in\beta(G)$ acting on the states $|\psi'_i\rangle$ we obtain a new set of states stabilised by $\mathcal{S}_{\alpha}$ as well as $g_k,g_l$. Extending this procedure to all of the generators in $\beta(G)$ we finally arrive at the state in Eq.~(\ref{eq:schmidt_decomposition}).

Showing that this decomposition is also minimal becomes now trivial for graph states for which $|M_{max}|=|\beta(G)|$ as $\log D_{\alpha}$ reaches the lower bound for the Schmidt measure known from \cite{Hein:2004,Hein:2006}. Therefore the entanglement as evaluated by the Schmidt measure is given by
\begin{equation*}\label{eq:proof_schmidt}
	E_S(|G\rangle)=\log D_{\alpha},
\end{equation*}
which concludes he proof.
\end{proof}

\begin{obs}\label{obs:obs2}
	Bipartite graph states for which the maximum independent set $\alpha(G)$ is also given by a 2-colouring of $G$ can be written as $|G\rangle=\frac{1}{\sqrt{D_{\alpha}}}\sum_{i=1}^{D_{\alpha}}|\psi^{\alpha}_i\rangle$. Because any qubit in the minimum vertex cover $\beta(G)$ has all its neighbours in the maximum independent set $\alpha(G)$, the action of any generator $g_k\in\mathcal{S}_{\beta}$ is to only permute the states stabilised by $\mathcal{S}_{\alpha}$, that is $g_k|\psi^{\alpha}_j\rangle=|\psi^{\alpha}_i\rangle$ for $i,j\in\alpha(G)$. Therefore all the amplitudes in the linear decomposition of Eq.~(\ref{eq:schmidt_decomposition}) are positive.
\end{obs}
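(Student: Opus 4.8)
The plan is to make the two assertions in the statement precise: first that every cover generator merely permutes the product basis $\{|\psi^\alpha_i\rangle\}$ without any phase, and then that this sign-freeness forces each amplitude in Eq.~(\ref{eq:schmidt_decomposition}) to be $+1$. The whole argument hinges on translating the hypothesis into one graph-theoretic fact. Since $G$ is bipartite and $\alpha(G)$ is taken to be a colour class of a proper $2$-colouring, the complement $\beta(G)=V-\alpha(G)$ is the other colour class; a proper colouring has no monochromatic edges, so there are no edges inside $\beta(G)$. Equivalently, every $k\in\beta(G)$ satisfies $N_k\subseteq\alpha(G)$. Hence in the generator $g_k=X_k\bigotimes_{j\in N_k}Z_j$ of Eq.~(\ref{eq:stabilizer}) every $Z$ acts on a qubit of the maximum independent set and none on a qubit of the cover. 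This is the only place the hypothesis enters, and I expect it to be the conceptual crux.

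First I would write the product states explicitly. Labelling them by the computational assignment $b=(b_k)_{k\in\beta(G)}\in\{0,1\}^{\beta(G)}$ of the cover qubits, I claim
\[
  |\psi^\alpha_b\rangle=\Bigl(\bigotimes_{i\in\alpha(G)}\frac{1}{\sqrt2}\bigl(|0\rangle_i+(-1)^{s_i(b)}|1\rangle_i\bigr)\Bigr)\otimes\Bigl(\bigotimes_{k\in\beta(G)}|b_k\rangle_k\Bigr),\qquad s_i(b)=\sum_{m\in N_i}b_m.
\]
A short check confirms these are the $+1$ eigenstates of $\mathcal{S}_\alpha$: for $i\in\alpha(G)$ one has $N_i\subseteq\beta(G)$, so $X_i$ contributes the eigenvalue $(-1)^{s_i(b)}$ on the $\pm$ state of qubit $i$, while $\bigotimes_{m\in N_i}Z_m$ contributes $(-1)^{s_i(b)}$ on $|b\rangle$, the two cancelling. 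There are $2^{|\beta(G)|}=D_\alpha$ of them and they are manifestly product, consistent with Observation~\ref{obs:obs1}.

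The key step is evaluating $g_k|\psi^\alpha_b\rangle$ for $k\in\beta(G)$. The flip $X_k$ sends $|b_k\rangle\mapsto|1-b_k\rangle$, i.e. $b\mapsto b\oplus e_k$ with $e_k$ the unit string supported on $k$. Each $Z_j$ with $j\in N_k\subseteq\alpha(G)$ acts on a qubit in a superposition $\frac{1}{\sqrt2}(|0\rangle+(-1)^{s_j}|1\rangle)$, so it only flips the relative sign, $s_j\mapsto s_j+1$, and introduces no global phase. Since $s_i(b\oplus e_k)=s_i(b)+[\,i\in N_k\,]$ by symmetry of adjacency, the two effects match exactly and $g_k|\psi^\alpha_b\rangle=|\psi^\alpha_{b\oplus e_k}\rangle$, with coefficient precisely $+1$. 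Here I would stress the contrast with the general case of Theorem~\ref{thm:Schmidt}: were $k$ adjacent to another cover qubit $m$, the factor $Z_m$ would hit the computational state $|b_m\rangle$ and produce a genuine global phase $(-1)^{b_m}$, the very source of the negative signs encoded by $f_i(\mathcal{S})$; the $2$-colouring hypothesis is exactly what removes every such edge. Distinguishing this harmless relative sign-flip from a dangerous global phase is the point that needs care.

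Finally I would assemble the graph state by symmetrising. The group $\mathcal{S}_\beta=\langle g_k:k\in\beta(G)\rangle\cong(\mathbb{Z}_2)^{|\beta(G)|}$ acts on the labels by $b\mapsto b\oplus v$, and by the previous step this action is sign-free and simply transitive on the $D_\alpha$ states. Projecting $|\psi^\alpha_0\rangle$ onto the joint $+1$ eigenspace via $\prod_{k\in\beta(G)}\frac{1}{2}(I+g_k)=\frac{1}{2^{|\beta(G)|}}\sum_{h\in\mathcal{S}_\beta}h$ therefore yields $\sum_b|\psi^\alpha_b\rangle$ up to normalisation, so that $|G\rangle=\frac{1}{\sqrt{D_\alpha}}\sum_b|\psi^\alpha_b\rangle$ with $f_i(\mathcal{S})\equiv0$. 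I do not anticipate difficulty in this last step; it is the standard stabiliser symmetrisation, and the only input it needs is the sign-free permutation established above.
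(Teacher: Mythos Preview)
Your proposal is correct and follows essentially the same reasoning as the paper. The paper's Observation~\ref{obs:obs2} is stated without a separate proof; the justification is folded into the statement itself, relying on the clause in the proof of Theorem~\ref{thm:Schmidt} that negative signs arise only when a cover vertex $k$ is adjacent to another cover vertex. Your argument makes this explicit: you write down the product states $|\psi^\alpha_b\rangle$ concretely, verify that $g_k$ with $N_k\subseteq\alpha(G)$ acts as the sign-free permutation $b\mapsto b\oplus e_k$, and then symmetrise. This is a careful unpacking of exactly what the paper asserts, not a different route.
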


Using the decomposition in Eq.~(\ref{eq:schmidt_decomposition}) we can now compute the other two measures.

\begin{thm}\label{thm:REE}
	(Relative entropy of entanglement): The closest separable state $\omega$ to a given graph state $|G\rangle$ is given by
	\begin{equation}\label{eq:CSS}
		\omega=\frac{1}{D_{\alpha}}\sum_{i=1}^{D_{\alpha}}|\psi^{\alpha}_i\rangle\langle\psi^{\alpha}_i|.
	\end{equation}
	So the closest separable state is given by equal mixture of the states stabilised by $\mathcal{S}_{\alpha}$.
\end{thm}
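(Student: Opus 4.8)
The plan is to exploit the explicit decomposition of Theorem~\ref{thm:Schmidt} together with the known bounds of Section~\ref{subsec:Bounding}, so that verifying optimality reduces to a single relative-entropy evaluation. First I would record that $\omega$ is manifestly separable: by Observation~\ref{obs:obs1} each $|\psi^{\alpha}_i\rangle$ is a product state, since the retained generators sit on the non-adjacent qubits of $\alpha(G)$, so $\omega$ is a convex combination of product-state projectors. The states $\{|\psi^{\alpha}_i\rangle\}$ form an orthonormal basis of the joint $+1$ eigenspace $\mathcal{H}_{\alpha}$ of $\mathcal{S}_{\alpha}$, hence $\omega=\frac{1}{D_{\alpha}}P_{\alpha}$, where $P_{\alpha}=\sum_i|\psi^{\alpha}_i\rangle\langle\psi^{\alpha}_i|$ projects onto $\mathcal{H}_{\alpha}$. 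Writing $P_{\alpha}=\prod_{j\in\alpha(G)}\frac{1}{2}(I+g_j)=\frac{1}{2^{|\alpha(G)|}}\sum_{\sigma\in\mathcal{S}_{\alpha}}\sigma$ and using $|\alpha(G)|+|\beta(G)|=N$ with $D_{\alpha}=2^{|\beta(G)|}$ gives the compact form $\omega=\frac{1}{2^N}\sum_{\sigma\in\mathcal{S}_{\alpha}}\sigma$, the direct analogue of Eq.~(\ref{eq:graph_state_projector}) with the sum restricted to the subgroup $\mathcal{S}_{\alpha}$.

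Next I would evaluate the relative entropy. Since $|G\rangle$ is by Theorem~\ref{thm:Schmidt} an equal-weight superposition of the $|\psi^{\alpha}_i\rangle$, it lies in $\mathcal{H}_{\alpha}=\mathrm{supp}(\omega)$, so $P_{\alpha}|G\rangle=|G\rangle$ and the relative entropy is finite. Because $\omega$ is flat on its support with eigenvalue $1/D_{\alpha}$, one has $\log\omega=-(\log D_{\alpha})P_{\alpha}$ on that support, and for the pure state the simplified form $S=-\mathrm{Tr}[|G\rangle\langle G|\log\omega]$ yields
\begin{equation*}
S(|G\rangle\langle G|\,||\,\omega)=(\log D_{\alpha})\,\langle G|P_{\alpha}|G\rangle=\log D_{\alpha}=|\beta(G)|,
\end{equation*}
where the last equality again uses $D_{\alpha}=2^{|\beta(G)|}$.

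Finally I would invoke the bounds. The value just obtained saturates the upper bound $N-|\alpha(G)|=|\beta(G)|\geq E_R$ recalled in Section~\ref{subsec:Bounding}. In the regime assumed throughout this section, where the lower bound $|M_{max}|$ and the upper bound $|\beta(G)|$ coincide, we have $E_R=|\beta(G)|=\log D_{\alpha}$. Since $\omega$ is separable and attains exactly this value, it achieves the minimum in the definition of $E_R$ and is therefore the closest separable state, as claimed.

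The main obstacle, and the point that genuinely needs the earlier results, is the support condition $\mathrm{supp}(|G\rangle\langle G|)\subseteq\mathrm{supp}(\omega)$: without $|G\rangle\in\mathcal{H}_{\alpha}$ the relative entropy would diverge and the candidate would be useless. This is precisely what the decomposition of Theorem~\ref{thm:Schmidt} guarantees. The only other delicate ingredient is that establishing $\omega$ as the true minimiser, rather than merely a separable state attaining the upper bound, relies on the coincidence of the matching lower bound and the vertex-cover upper bound; outside that regime the same construction still produces a good candidate but optimality cannot be asserted.
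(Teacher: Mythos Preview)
Your proposal is correct and follows essentially the same strategy as the paper: verify separability of $\omega$, compute $S(|G\rangle\langle G|\,||\,\omega)=\log D_{\alpha}$, and conclude optimality from the coincidence of the matching lower bound with the vertex-cover upper bound. Your evaluation of the relative entropy via the projector identity $\omega=\tfrac{1}{D_{\alpha}}P_{\alpha}$ is a cleaner variant of the paper's explicit index computation using the decomposition of Theorem~\ref{thm:Schmidt}, and your explicit treatment of the support condition is a welcome addition that the paper leaves implicit; the stabiliser-sum form you derive along the way is recorded separately in the paper as the observation immediately following the proof.
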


\begin{proof}
	States of the form given by Eq.~(\ref{eq:CSS}) are clearly separable because they are a mixture of product states. Computing the relative entropy between $\rho=|G\rangle\langle G|$ and $\omega$ we get
	\begin{eqnarray*}\label{eq:CSS_proof}
		S(\rho||\omega) & = & -\textrm{Tr}[\rho\log\omega] \nonumber \\
		& = & -\frac{1}{D_{\alpha}}\log\frac{1}{D_{\alpha}}\sum_{ijk}^{D_{\alpha}}(-1)^{f_{i}(\mathcal{S})+f_{j}(\mathcal{S})}\delta_{jk}\delta_{ki} \nonumber \\
		& = & \log D_{\alpha}.
	\end{eqnarray*}
	For graph states which have $|M_{max}|=|\beta(G)|$ this is the lower bound found in \cite{Markham:2007}. Therefore $\omega$ is the closest separable state and the relative entropy of entanglement is given by
	\begin{equation*}\label{eq:REE}
		E_R(|G\rangle)=\log D_{\alpha}.
	\end{equation*}
	This concludes the proof.
\end{proof}

The closest separable state $\omega$ can be expressed in a similar form to Eq.~(\ref{eq:graph_state_projector}) as the sum of all the elements in $\mathcal{S}_{\alpha}$.
\begin{obs}
	We can write the closest separable state as
	\begin{equation}\label{eq:CSS_sum_rep}
		\omega=\frac{1}{2^{N}}\sum_{\sigma\in\mathcal{S}_{\alpha}}\sigma.
	\end{equation}
	This can be most easily seen by using Eq.~(\ref{eq:CSS}) and noting that $\langle\psi_j^{\alpha}|(\frac{1}{D_{\alpha}}\sum_{i=1}^{D_{\alpha}}|\psi^{\alpha}_i\rangle\langle\psi^{\alpha}_i|)|\psi^{\alpha}_k\rangle=\frac{1}{D_{\alpha}}\delta_{jk}$. Using Eq.~(\ref{eq:CSS_sum_rep}) and the fact that $\sigma|\psi^{\alpha}_i\rangle=|\psi^{\alpha}_i\rangle$ for all $\sigma\in\mathcal{S}_{\alpha}$ we can obtain
	\begin{eqnarray*}
		\langle\psi^{\alpha}_j|\frac{1}{2^N}\sum_{\sigma\in\mathcal{S}_{\alpha}}\sigma|\psi^{\alpha}_k\rangle & = & \frac{2^{|\alpha(G)|}}{2^N}\langle\psi^{\alpha}_j|\psi^{\alpha}_k\rangle\\
		& = & \frac{1}{2^{|\beta(G)|}}\delta_{jk},
	\end{eqnarray*}
	where we used $|\alpha(G)|+|\beta(G)|=N$ and that the cardinality of $\mathcal{S}_{\alpha}$ is $2^{|\alpha(G)|}$. This establishes equivalence between the two forms of $\omega$ in Eq.~(\ref{eq:CSS}) and Eq.~(\ref{eq:CSS_sum_rep}).
\end{obs}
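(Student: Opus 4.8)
The plan is to show the two forms of $\omega$ coincide by identifying the group average $\frac{1}{2^N}\sum_{\sigma\in\mathcal{S}_\alpha}\sigma$ as a rescaled orthogonal projector onto $\mathcal{H}_\alpha$, and then using that $\{|\psi^\alpha_i\rangle\}$ is an orthonormal basis of that subspace. First I would define $P_\alpha:=\frac{1}{|\mathcal{S}_\alpha|}\sum_{\sigma\in\mathcal{S}_\alpha}\sigma$ and establish that it is the projector onto $\mathcal{H}_\alpha$, exactly as in Eq.~(\ref{eq:graph_state_projector}) but for the subgroup $\mathcal{S}_\alpha$ rather than the full stabiliser. Hermiticity of $P_\alpha$ is immediate because each $\sigma\in\mathcal{S}_\alpha$ is a Hermitian stabiliser element with $\sigma^{\dag}=\sigma$ and $\sigma^2=I$. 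Idempotency follows from the standard group-averaging argument: for fixed $\sigma$ the map $\tau\mapsto\sigma\tau$ permutes $\mathcal{S}_\alpha$, so $\sum_{\tau}\sigma\tau=\sum_{\tau}\tau$ and hence $P_\alpha^2=P_\alpha$.

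Next I would pin down the rank of $P_\alpha$ by a trace computation. Every nonidentity element of $\mathcal{S}_\alpha$ is a nontrivial tensor product of Pauli matrices and is therefore traceless, so only the identity contributes and $\mathrm{Tr}(P_\alpha)=2^N/|\mathcal{S}_\alpha|=2^{N-|\alpha(G)|}=2^{|\beta(G)|}=D_\alpha$. A Hermitian idempotent of trace $D_\alpha$ is an orthogonal projector of rank $D_\alpha$; and since $\sigma|\psi^\alpha_i\rangle=|\psi^\alpha_i\rangle$ for all $\sigma\in\mathcal{S}_\alpha$, its image is precisely the $D_\alpha$-dimensional stabilised subspace $\mathcal{H}_\alpha$. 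Therefore $P_\alpha=\sum_{i=1}^{D_\alpha}|\psi^\alpha_i\rangle\langle\psi^\alpha_i|$, the resolution of the identity on $\mathcal{H}_\alpha$ in the orthonormal basis $\{|\psi^\alpha_i\rangle\}$.

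Finally I would rescale. Using $|\mathcal{S}_\alpha|=2^{|\alpha(G)|}$ together with $|\alpha(G)|+|\beta(G)|=N$,
$$\frac{1}{2^N}\sum_{\sigma\in\mathcal{S}_\alpha}\sigma=\frac{2^{|\alpha(G)|}}{2^N}\,P_\alpha=\frac{1}{2^{|\beta(G)|}}\,P_\alpha=\frac{1}{D_\alpha}\sum_{i=1}^{D_\alpha}|\psi^\alpha_i\rangle\langle\psi^\alpha_i|,$$
which is exactly Eq.~(\ref{eq:CSS}). Alternatively one may follow the route the observation itself sketches, simply matching matrix elements in the basis $\{|\psi^\alpha_i\rangle\}$: both forms reproduce $\frac{1}{D_\alpha}\delta_{jk}$, the sum representation because $\sigma|\psi^\alpha_k\rangle=|\psi^\alpha_k\rangle$ collapses each term to $\langle\psi^\alpha_j|\psi^\alpha_k\rangle$ and there are $2^{|\alpha(G)|}$ such terms. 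The one point that genuinely requires the projector fact — and hence the only real obstacle — is verifying that the group average carries no support on the orthogonal complement $\mathcal{H}_\alpha^{\perp}$; matching matrix elements on $\mathcal{H}_\alpha$ alone does not settle this, and it is precisely the idempotency-and-trace argument above that closes the gap and establishes the equivalence of the two forms of $\omega$.
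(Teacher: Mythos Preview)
Your proof is correct and in fact more complete than the paper's. The paper's argument, which you summarise at the end, simply matches matrix elements of the two expressions in the basis $\{|\psi^\alpha_i\rangle\}$ of $\mathcal{H}_\alpha$ and stops there. As you rightly point out, this alone does not establish operator equality on the full $2^N$-dimensional Hilbert space: one must also know that $\frac{1}{2^N}\sum_{\sigma\in\mathcal{S}_\alpha}\sigma$ has no support on $\mathcal{H}_\alpha^\perp$. Your idempotency-plus-trace argument supplies exactly this missing ingredient, showing directly that the group average is the rank-$D_\alpha$ orthogonal projector onto $\mathcal{H}_\alpha$. So your route is not merely different but strictly closes a gap the paper leaves open; the cost is a few extra lines, and the gain is that the equivalence is actually proved rather than only suggested.
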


\begin{thm}\label{thm:GM}
	(Geometric measure): The closest product state $|\phi\rangle$ is given by any state from the stabilised set $\{|\psi^{\alpha}_i\rangle\}$,
	\begin{equation}\label{eq:CPS}
		|\phi\rangle=|\psi^{\alpha}_i\rangle,\qquad\forall i\in\{1,\ldots,D_{\alpha}\}.
	\end{equation} 
\end{thm}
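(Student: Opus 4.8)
The plan is to evaluate the overlap between $|G\rangle$ and one of the candidate states $|\psi^\alpha_i\rangle$ directly from the decomposition of Theorem~\ref{thm:Schmidt}, show that the resulting value of $-\log|\langle\psi^\alpha_i|G\rangle|^2$ coincides with the already-established value of $E_G(|G\rangle)$, and then use the fact that $E_G$ is defined as a minimisation to conclude optimality. First I would record that the states $\{|\psi^\alpha_i\rangle\}$ are orthonormal. Each such product state is specified by a computational-basis assignment $\mathbf{b}\in\{0,1\}^{|\beta(G)|}$ to the vertex-cover qubits, with every independent-set qubit $a$ then forced into the $X$-eigenstate $|\pm\rangle$ whose sign makes $g_a$ act as $+1$; this is consistent precisely because every neighbour of $a$ lies in $\beta(G)$. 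Any two distinct assignments differ on at least one cover qubit, where one state carries $|0\rangle$ and the other $|1\rangle$, so $\langle\psi^\alpha_j|\psi^\alpha_i\rangle=\delta_{ij}$, and there are exactly $D_\alpha=2^{|\beta(G)|}$ of them, matching $\dim\mathcal{H}_\alpha$.

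With orthonormality in hand I would substitute the minimal decomposition from Theorem~\ref{thm:Schmidt} into the overlap. Because only the $i=j$ term survives,
\[
\langle\psi^\alpha_j|G\rangle=\frac{1}{\sqrt{D_\alpha}}\sum_{i=1}^{D_\alpha}(-1)^{f_i(\mathcal{S})}\langle\psi^\alpha_j|\psi^\alpha_i\rangle=\frac{(-1)^{f_j(\mathcal{S})}}{\sqrt{D_\alpha}},
\]
so that $|\langle\psi^\alpha_j|G\rangle|^2=1/D_\alpha$ independently of $j$. Hence $-\log|\langle\psi^\alpha_j|G\rangle|^2=\log D_\alpha$, and this single value is shared by every member of the stabilised set, which is exactly why the theorem can assert that \emph{any} $|\psi^\alpha_i\rangle$ serves as a closest product state.

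The final step is the optimality argument. Since $E_G(|G\rangle)=\min_{|\phi\rangle\in PROD}-\log|\langle\phi|G\rangle|^2$, every product state furnishes an upper bound, so in particular $E_G(|G\rangle)\le\log D_\alpha$. For the graph states under consideration, namely those with $|M_{max}|=|\beta(G)|$, the lower and upper bounds established in Section~\ref{subsec:Bounding} coincide and give $E_G(|G\rangle)=N-|\alpha(G)|=|\beta(G)|=\log D_\alpha$, matching the value obtained from $|\psi^\alpha_j\rangle$. The inequality is therefore saturated, so each $|\psi^\alpha_i\rangle$ attains the minimum and is genuinely a closest product state, with $E_G(|G\rangle)=\log D_\alpha=E_R(|G\rangle)$, consistent with Theorem~\ref{thm:REE}.

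I expect the delicate point to be the optimality claim rather than the overlap computation. The overlap is immediate once orthonormality is noted, but strictly speaking the assertion that no other product state does better relies on the coincidence of bounds established earlier; absent that equality one could only claim that $|\psi^\alpha_i\rangle$ is a candidate saturating the upper bound $N-|\alpha(G)|$. I would therefore state the optimality as holding precisely in the regime $|M_{max}|=|\beta(G)|$, mirroring the qualifications already attached to Theorems~\ref{thm:Schmidt} and~\ref{thm:REE}.
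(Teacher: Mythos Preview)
Your proposal is correct and follows essentially the same route as the paper: compute the overlap $\langle\psi^\alpha_j|G\rangle$ from the decomposition of Theorem~\ref{thm:Schmidt}, obtain $-\log|\langle\psi^\alpha_j|G\rangle|^2=\log D_\alpha$, and conclude optimality by matching this against the known lower bound from \cite{Markham:2007}. Your version is simply more explicit about the orthonormality of the $|\psi^\alpha_i\rangle$ and about the logical structure of the optimality step (upper bound from the candidate, lower bound from the literature), points the paper leaves implicit.
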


\begin{proof}
	Using the minimum linear decomposition into product states in Eq.~(\ref{eq:schmidt_decomposition}) and substituting $|\phi\rangle=|\psi^{\alpha}_i\rangle$ we find that the geometric measure is
	\begin{eqnarray*}\label{eq:GM_proof}
		E_G(|G\rangle) & = & -\log|\langle\psi^{\alpha}_i|\frac{1}{\sqrt{D_{\alpha}}}\sum_{j=1}^{D_{\alpha}}(-1)^{f_{j}(\mathcal{S})}|\psi^{\alpha}_j\rangle|^2 \nonumber \\
		& = & \log D_{\alpha}.
	\end{eqnarray*}
	This is the lower bound found in \cite{Markham:2007} and therefore concludes the proof.
\end{proof}

We demonstrate how to quantify the entanglement measures in the following example of a 6-qubit graph state.

\begin{ex}\label{ex:ex2}
	Consider a graph state $|G\rangle$ of 6 qubits with an underlying graph $G$ pictured in Fig.~\ref{fig:fig_6}.
	\begin{figure}[t]
	\centering
	\includegraphics[scale=2.3]{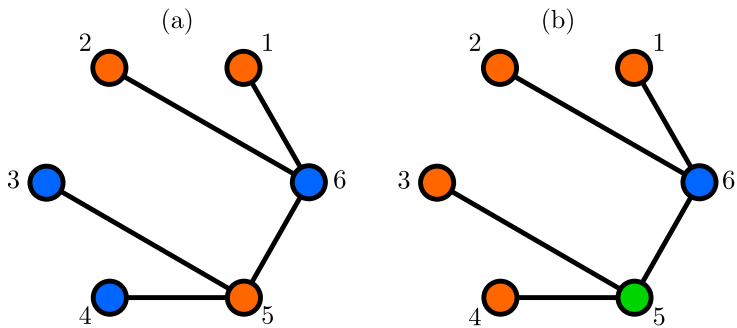}
	\caption{\label{fig:fig_6} Graph $G$ considered in Ex.~(\ref{ex:ex2}). (a)~Graph $G$ is bipartite since a 2-colouring can be found. (b)~However the maximum independent set $\alpha(G)$ is identified using a 3-colouring.}
	\end{figure}
The graph state $|G\rangle$ is stabilised by
\begin{eqnarray*}\label{eq:example2_stabilizer}
	\mathcal{S} & = & \langle \underbrace{XIIIIZ}_{g_1},\underbrace{IXIIIZ}_{g_2},\underbrace{IIXIZI}_{g_3}, \nonumber \\
	&& \underbrace{IIIXZI}_{g_4},\underbrace{IIZZXZ}_{g_5},\underbrace{ZZIIZX}_{g_6}\rangle.
\end{eqnarray*}
The maximum independent set can be quickly obtained by a 3-colouring and is $\alpha(G)=\{1,2,3,4\}$. Therefore the corresponding stabiliser is $\mathcal{S}_{\alpha}=\langle g_1,g_2,g_3,g_4\rangle$. $|\alpha(G)|>3$ which means that the lower and upper bound coincide. The stabilised states given by $\mathcal{S}_{\alpha}$ are
\begin{equation*}\label{eq:stabilized_states}
	\{|\psi^{\alpha}_i\rangle\}=\{|++++00\rangle,|--++01\rangle,|++--10\rangle,|----11\rangle\}.
\end{equation*}
The action of the stabiliser $\mathcal{S}_{\beta}$ associated with the minimum vertex cover is the following
\begin{eqnarray*}
	g_5|\psi^{\alpha}_1\rangle & = & |\psi^{\alpha}_3\rangle \qquad\textrm{and}\qquad g_5|\psi^{\alpha}_2\rangle=-|\psi^{\alpha}_4\rangle, \\
	g_6|\psi^{\alpha}_1\rangle & = & |\psi^{\alpha}_2\rangle \qquad\textrm{and}\qquad g_6|\psi^{\alpha}_3\rangle=-|\psi^{\alpha}_4\rangle,
\end{eqnarray*}
which means that the graph state can be written in the following form
\begin{equation*}
	|G\rangle=\frac{1}{2}[|\psi^{\alpha}_1\rangle+|\psi^{\alpha}_2\rangle+|\psi^{\alpha}_3\rangle-|\psi^{\alpha}_4\rangle].
\end{equation*}
Using the closest separable state $\omega$ in Eq.~(\ref{eq:CSS}) and the closest product state in Eq.~(\ref{eq:CPS}) we can finally show that for this graph state $|G\rangle$
\begin{equation*}
	E_S(|G\rangle)=E_R(|G\rangle)=E_G(|G\rangle)=2.
\end{equation*}
\end{ex}

Knowing how to find the closest separable state $\omega$ to a given graph state $|G\rangle$, it is now possible to determine the form of the closest separable states to all of the graph states in the orbit generated by local Clifford operations. Two graph states are LC-equivalent, $|G'\rangle=U^{LC}|G\rangle$, if they are related by a local unitary $U^{LC}$. Using Eq.~(\ref{eq:schmidt_decomposition}), $|G'\rangle$ can be expressed as a superposition of states $U^{LC}|\psi^{\alpha}_i\rangle$. Applying the unitary we see that the closest separable state has the form $\omega'=\frac{1}{D_{\alpha}}\sum U^{LC}|\psi^{\alpha}_i\rangle\langle\psi^{\alpha}_i|U^{LC\dagger}$.

\subsection{\label{subsec:maximal} Maximal entanglement of graph states}

An immediate consequence of this approach to analysing entanglement is that we can identify which graph states are maximally entangled in many cases. By maximally entangled we mean with respect to the three measures we are considering. For any graph state that is maximally entangled it must be true that $E_R=E_G\geq|M_{max}|$ and $E_S\geq|M_{max}|$. For bipartite graph states these measures cannot be larger than this value due to K\"onig's Theorem \cite{Diestel:2010}. Therefore any bipartite graph state $|G\rangle$ whose underlying graph $G$ has the property that $|M_{max}|=\left\lfloor\frac{N}{2}\right\rfloor$ is maximally entangled. Examples of such states include linear graph states with open boundaries, ring states with even $N$ and cluster states in $d$ dimensions. This holds true also for non-bipartite graph states for which the bounds are equal. An example of such a state is pictured in Fig.~\ref{fig:fig_4}.

The situation is quite different for the case of graph states with unequal entanglement bounds. Again any graph state that is maximally entangled must have $|M_{max}|=\left\lfloor\frac{N}{2}\right\rfloor$. However the true value of entanglement is now unclear. For general graph states it is not even possible to calculate this bound. Therefore we limit ourselves to various types of regular two dimensional lattices for the rest of this section.

\begin{figure}[t]
	\centering
	\includegraphics[scale=1.1]{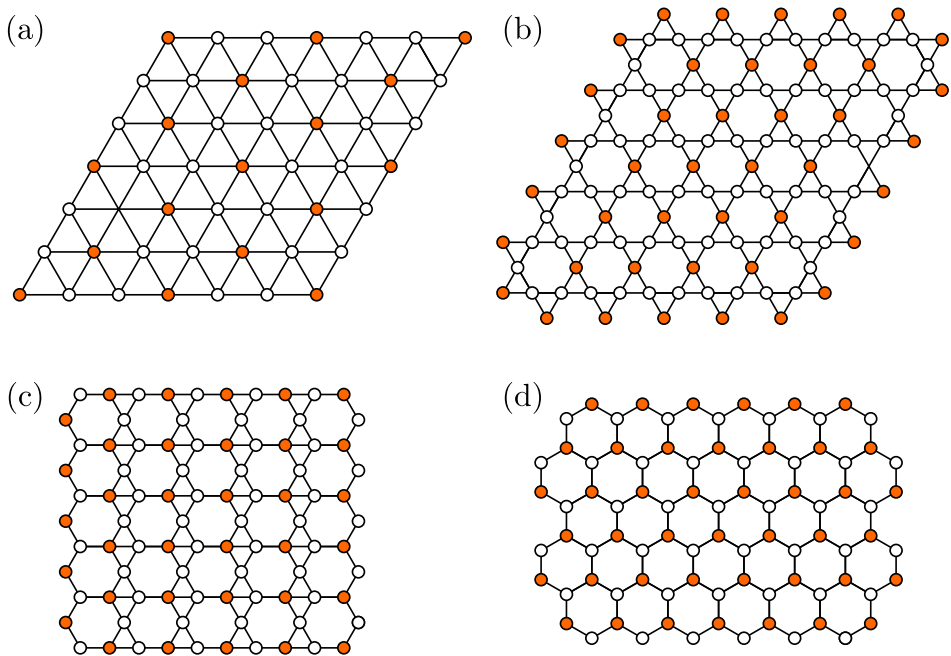}
	\caption{\label{fig:fig_7} Four lattices for which we identify the maximum independent set $\alpha(G)$ allowing us to determine the scaling of the gap between upper and lower bound for entanglement for general $N$. The maximum independent set corresponds to orange vertices. The gap is finite for the triangular lattice (a), the kagome lattice (b) and the hexa-triangular lattice (c). For the hexagonal lattice (d) it is trivially zero since this lattice is bipartite.}
\end{figure}

The lattices that we consider are pictured in Fig.~\ref{fig:fig_7}. They are the triangular, kagome, hexa-triangular and hexagonal lattices. The triangular, kagome and hexagonal lattices have been shown to be universal resources for measurement-based quantum computation \cite{Nest:2006}. We are interested in the scaling of the gap between upper and lower bounds $\Delta=|\beta(G)|-|M_{max}|$. For hexagonal lattice this gap is trivial since the lattice is bipartite and therefore $\Delta_{hexagonal}=0$. For the other three lattices that gap is
\begin{eqnarray*}
	\Delta_{tri} & = & \left\lceil\frac{N}{2}\right\rceil-\sqrt{N}-2\sum_{j=1}^{\left\lfloor\frac{N-1}{3}\right\rfloor}(\sqrt{N}-3j), \\
	\Delta_{hex-tri} & = & \frac{1}{18}(12N-3\sqrt{9+12N}+9)-\left\lfloor\frac{N}{2}\right\rfloor, \\
	\Delta_{kag} & = & \frac{1}{9}(6N-\sqrt{13+3N}-11)-\left\lfloor\frac{N}{2}\right\rfloor,
\end{eqnarray*}
where $\Delta_{tri}$ is valid for $L>3$, with $L^2=N$ being the number of vertices on one side of the triangular lattice. The particular form of $\Delta$ depends on the boundary conditions. However the general behavior of the scaling remains unchanged for different boundary conditions. Interestingly the lower bound for all four lattices is the same, $|M_{max}|=\left\lfloor\frac{N}{2}\right\rfloor$. The scaling of $\Delta$ is pictured in Fig.~\ref{fig:fig_8}.

\begin{figure}[t]
	\centering
	\includegraphics[scale=0.9]{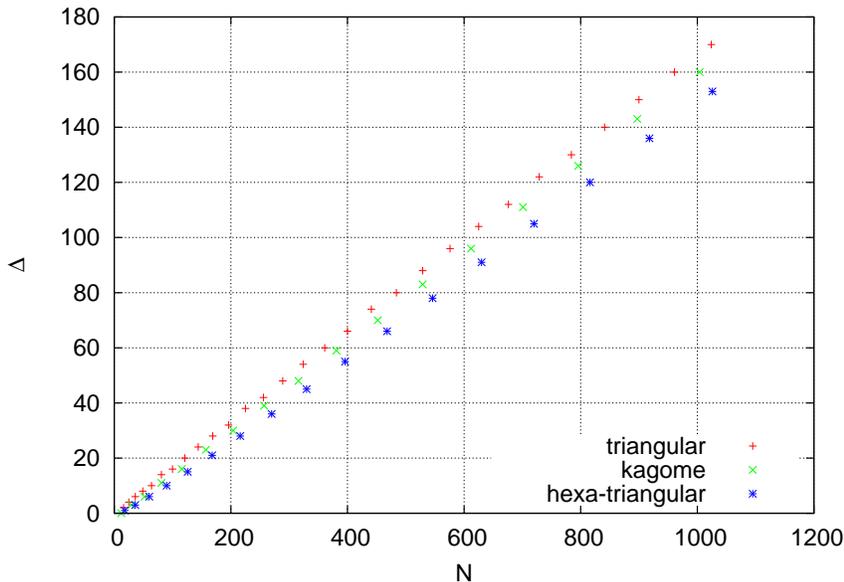}
	\caption{\label{fig:fig_8} Scaling of the gap $\Delta$ for the three lattices pictured in Fig.~\ref{fig:fig_6}(a)-(c). The gap $\Delta$ increases linearly in the leading term.}
\end{figure}

\section{\label{sec:Alternative}Alternative description of CSS}

In this section we focus only on one measure, the relative entropy of entanglement, and we abandon the stabiliser generator description of graph states to see if it is possible to arrive at the closest separable state $\omega$ using complementary methods. We consider two approaches. The first one is inspired by projected entangled pairs states description of graph states. Usually PEPs methods are used to describe pure entangled states, however we adapt this approach to construct the closest separable state $\omega$. The second approach of obtaining $\omega$ relies on introducing optimal amounts of noise in the form of relative phases and averaging over these phases. The success of both of these methods rests on our ability to identify the maximum independent set $\alpha(G)$.

\subsection{\label{subsec:PEPS}PEPs construction}

We briefly highlight the PEPs description of entangled states \cite{Verstraete:2004}. Consider a graph state $|G\rangle$ with $N$ qubits. A PEPs $|\Psi\rangle\in\mathbb{C}^{2^{N}}$ is constructed by replacing a physical qubit $a$ by $|N_a|$ virtual qubits where $|N_a|$ denotes the degree of vertex $a$. Each physical edge $(a,b)$ is then replaced by maximally entangled state of the corresponding two virtual qubits $|G_2\rangle=\frac{1}{\sqrt{2}}(|0+\rangle+|1-\rangle)$. The original graph state $|G\rangle$ can then be obtained by applying a projector $P_a:=|0\rangle\langle0|^{a^{1}}\ldots\langle0|^{a^{|N_{a}|}}+|1\rangle\langle1|^{a^{1}}\ldots\langle1|^{a^{|N_{a}|}}$ at each physical site.

The above approach can be adapted to describe separable mixed states with few changes. Instead of maximally entangled pairs of virtual qubits the basic building blocks are maximally correlated separable pairs of qubits $\omega_2$. Maximally correlated in this sense means that the relative entropy between the separable states and the tensor product of its subsystems is unity, $S(\omega_2||\omega_2^{(1)}\otimes\omega_2^{(2)})=S(\omega_2||\frac{1}{4}I\otimes I)=1$ where $\omega_2^{(1)}=\textrm{Tr}_2[\omega_2]$ and similarly for $\omega_2^{(2)}$. In fact it will be necessary to use two various separable states $\omega_2$. The projectors being applied to physical sites will also have a different structure compared to the case of pure entangled states.

Define two 2-qubit maximally correlated separable states of virtual qubits $i'$ and $j'$
\begin{equation}\label{eq:two_omegas}
	\eqalign{\omega_{i'j'}^A:=|+0\rangle\langle+0|+|-1\rangle\langle-1| \cr
	\omega_{i'j'}^B:=|0+\rangle\langle0+|+|1-\rangle\langle1-|.}
\end{equation}
Here and in the rest of this subsection we omit the normalisation constants. Virtual qubits are denoted by primed letters $a'$ and physical sites by $a$. These states are in fact both closest separable states to a 2-qubit graph state. It is also useful to give these states a graphical representation depicted in Fig.~\ref{fig:fig_9}.
\begin{figure}[t]
	\centering
	\includegraphics[scale=2.3]{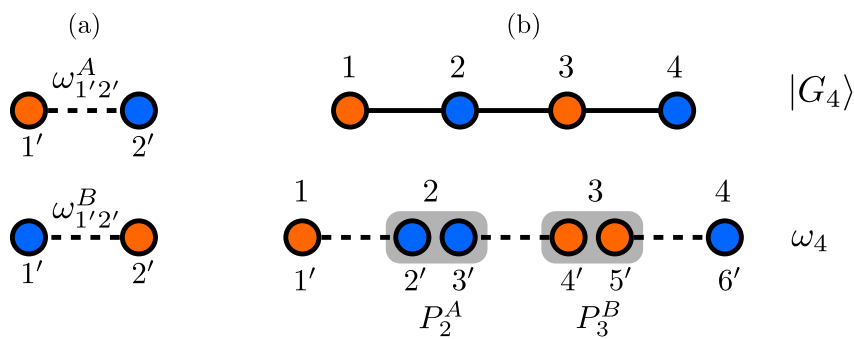}
	\caption{\label{fig:fig_9} (a) The two 2-qubit states from Eq.~\ref{eq:two_omegas} used in construction of closest separable states. The dashed edges represent that the states are separable. We choose the convention that orange vertex corresponds to a qubit in $\{|\pm\rangle\}$ basis and blue vertex corresponds to a qubit in $\{|0\rangle,|1\rangle\}$ basis. (b) Open linear 4-qubit graph state $|G_4\rangle$ and its corresponding closest separable state $\omega_4$.}
\end{figure}
Using the separable states in Eq.~(\ref{eq:two_omegas}) we construct a separable state of virtual qubits by placing either $\omega_{i'j'}^A$ or $\omega_{i'j'}^B$ on edges of the graph state. The 2-qubit separable states are picked in such a way that all virtual qubits at a physical site $a\in\alpha(G)$ are orange. In the case of bipartite graph states this means that all virtual qubits at physical sites $b\in\beta(G)$ will be of the same colour, blue. However this is not true anymore in the case of non-bipartite graph states where virtual qubits at a physical site $b\in\beta(G)$ will be of both colours. This is illustrated in Fig.~\ref{fig:fig_8}. Finally in analogy to the usual PEPs construction the virtual qubits at physical sites are projected using the following maps
\begin{eqnarray*}\label{eq:projectors_bipartite}
	P_a^A & := & |0\rangle\langle0\ldots0|+|1\rangle\langle1\ldots1|,\nonumber \\
	P_a^B & := & |+\rangle\langle\tilde{+}|+|-\rangle\langle\tilde{-}|.
\end{eqnarray*}
The projector $P_a^A$ is applied if the physical site $a\in\beta(G)$ contains at least one virtual qubit of blue color and $P_a^B$ is applied if $a\in\alpha(G)$ which means that all its virtual qubits are orange. $|\tilde{+}\rangle$ is an equal superposition of all tensor product states $\{|+\rangle,|-\rangle\}$ which are $+1$ eigenstates of $X\otimes\ldots\otimes X$. Similarly $|\tilde{-}\rangle$ is an equal superposition of all $-1$ eigenstates of $X\otimes\ldots\otimes X$.

How this construction works is most easily seen when illustrated on an explicit example. Consider an open linear graph state of 4 qubits as in Fig.~\ref{fig:fig_8}. Following the 2-colouring of the graph the tensor product of 6 virtual qubits takes the following form:
\begin{equation*}
	\omega_6'=\omega_{1'2'}^A\otimes\omega_{3'4'}^B\otimes\omega_{5'6'}^A.
\end{equation*}
The desired 4-qubit closest separable state is then obtained by applying the projector
\begin{equation*}
	\omega_4=(P_2^A\otimes P_3^B)\omega_6'(P_2^A\otimes P_3^B),
\end{equation*}
where $P_2^A=|0\rangle\langle00|+|1\rangle\langle11|$ acts on physical site 2 and $P_3^B=|+\rangle\langle++|+|+\rangle\langle--|+|-\rangle\langle+-|+|-\rangle\langle-+|$ acts on physical site 3.

\subsection{\label{subsec:Noise} Noise construction}

In this subsection we focus on the third approach of constructing the closest separable state $\omega$. The common feature of both previous approaches is the necessity of identifying either the maximum independent set $\alpha(G)$ or the minimum vertex cover $\beta(G)$. This remains true for the approach presented here as well. However this time we ask the question if there is a simple way of obtaining $\omega$ by introducing noise to the pure state rather than resorting to methods based on stabiliser generators or pairs of maximally correlated separable states.

This new approach relies on introducing distinct relative phases to certain qubits and then averaging over them to obtain $\omega$. The minimum number of the relative phases is equal to the cardinality of the minimum vertex cover $|\beta(G)|$.

The graph state vector can be written in the following form \cite{Briegel:2001}
\begin{equation}\label{eq:graph_vector}
	|G\rangle=\frac{1}{2^{N/2}}\bigotimes_{j=1}^{N}(|0\rangle_j+|1\rangle_j Z_{N'_j}),
\end{equation}
where the Pauli $Z$ matrix is applied to a subset of qubit $j$'s neighborhood $N'_j:=\{i\in N_j|i>j\}$. Now lets define a new vector $|\Phi\rangle$ that differs from the graph state in Eq.~(\ref{eq:graph_vector}) in that it contains the above mentioned relative phases $\phi_j$
\begin{equation*}
	|\Phi\rangle:=\frac{1}{2^{N/2}}\bigotimes_{j=1}^N(|0\rangle_j+e^{i\phi_jm(j)}|1\rangle_j Z_{N'_j}),
\end{equation*}
where $m(j):V\rightarrow\{0,1\}$ is an indicator function from the set of all vertices $V$ given by
\begin{equation*}
	m(j) := \left\{
		\begin{array}{cl}
			0 & \textrm{if } j\in\alpha(G),\\
			1 & \textrm{if } j\in\beta(G).
		\end{array} \right.
\end{equation*}
So the relative phase $\phi_j$ is applied only to qubits that correspond to vertices in the minimum vertex cover. For example a 3-qubit open linear graph state with a relative phase is $|\Phi\rangle=\frac{1}{2\sqrt{2}}(|0\rangle_1+|1\rangle_1Z_2)\otimes(|0\rangle_2+e^{i\phi_{2}}|1\rangle_2Z_3)\otimes(|0\rangle_3+|1\rangle_3)$. Finally the closest separable state is given by averaging over these phases
\begin{equation*}
	\omega=\frac{1}{(2\pi)^{|\beta(G)|}}\int_{0}^{2\pi}d\phi|\Phi\rangle\langle\Phi|,
\end{equation*}
where $d\phi=\Pi_{\{j|j\in\beta(G)\}}d\phi_j$.

\section{\label{sec:Conclusions}Conclusions}

We have presented a method of evaluating three multipartite entanglement measures in pure graph states. Our approach uses simple group theoretic arguments to identify a suitable subspace of the original Hilbert space whose properties can be used to find the relevant closest separable and product states as well as the minimal linear decomposition of a pure graph state $|G\rangle$.

The problem of evaluating entanglement measures can be mapped directly to a well known problem of identifying the maximum independent set in graph theory. Knowing the size of the maximum independent set corresponds to knowing the upper bound for all three entanglement measures as well as whether the upper and lower bounds are equal. Identifying which qubits comprise the maximum independent set allows us to construct the minimal linear decomposition of $|G\rangle$ into product states as well as its closest separable and closest product state. 

The closest separable state $\omega$ admits a non-stabiliser description using a PEPs inspired construction. This immediately begs the question whether a suitable $\omega$ can be constructed for weighted graph states. Any realistic scheme of preparing graph states will use entangling gates best described by a general control phase gate where the phase does not always have the ideal value of $\phi=\pi$. Rather it is picked randomly from some distribution centered around this ideal value.

Our techniques developed so far are the first step towards investigating total entanglement properties of such realistic systems. Furthermore our methods can be used in the study of entanglement in lattice models with long-range interactions. The strength and range of these Ising-type interactions can be captured by the phase in the entangling gate. All work on this topic has so far been limited to the study of bipartite entanglement measures. We will present our findings on this topic in a separate paper.

Interesting observation is that the lower bound for entanglement of any pure graph state can be found efficiently since the maximum matching problem of an arbitrary graph is in the P complexity class. On the other hand the maximum independent set problem, and hence also evaluation of the upper bound, is NP-hard for a general graph. One exception are bipartite graphs for which $|M_{max}|=N-|\alpha(G)|$. Efficient estimation of the upper bound is still an open problem. It would be interesting to see whether separable states that approximate the closest separable state can be constructed efficiently.

A closely related question is whether there is some deeper relationship between the upper and the lower bound. Calculations for graph states up to 10 qubits suggest that the gap between the bounds grows very slowly. It would be interesting to see how this changes for the case of larger and more general graph states. We have demonstrated that for certain regular lattices in two spatial dimensions the gap $\Delta$ increases linearly with the number of qubits $N$. This is not too surprising. Because we have considered regular lattices, linear increase of $N$ results in a linear increase of the cardinality of the maximum independent set $|\alpha(G)|$ while the size of the maximum matching $|M_{max}|$ remains constant. However this behaviour is not likely to be true for more general graphs. Particularly interesting would be to find a relationship between the size of the gap and some structural quantities of the underlying graph that can be computed efficiently.

This naturally leads to the final question which is concerned with evaluating entanglement in graph states where the bounds are not equal. In this case our methods can achieve the upper bound and therefore do not say anything concrete about the actual entanglement of the graph state. Numerical evidence suggests that for certain states geometric measure is less than the upper bound \cite{Chen:2010} whereas for some other states it is equal to the upper bound. An open question is to see if the three entanglement measures are still equal when the bounds are different and to characterize states whose upper bound is the actual value for entanglement.

\ack
MH and MM acknowledge financial support from Japan Society for the Promotion of Science (JSPS) by KAKENHI grants No. 23-01770, No. 23540463 and No. 2324000.

\section*{References}
\bibliographystyle{unsrt}
\bibliography{direct-bib}

\end{document}